\algrenewcommand{\algorithmiccomment}[1]{\hskip3em$\bullet$ #1} % this is to change the comment shape
\def\tr{\mathrm{tr}}
\newtheoremstyle{mytheoremstyle} % name
{\topsep}                    % Space above
{0pt}                        % Space below
{\itshape}                   % Body font
{}                           % Indent amount
{\bfseries}                  % Theorem head font
{.}                          % Punctuation after theorem head
{.5em}                       % Space after theorem head
{}  % Theorem head spec (can be left empty, meaning ‘normal’)
\theoremstyle{mytheoremstyle}
\newtheorem{theorem}{Theorem}
\newtheorem{corollary}{Corollary}
\newtheorem{remark}{Remark}
\DeclareMathOperator*{\argmin}{arg\,min}
\begin{document}

\title{Joint Unicast and Multi-group Multicast Transmission in Massive MIMO Systems}

\author{{Meysam Sadeghi, Student Member, IEEE\thanks{M. Sadeghi (meysam.sadeghi@liu.se), E. Bj\"{o}rnson (emil.bjornson@liu.se), and E. G. Larsson (erik.g.larsson@liu.se) are with Department of Electrical Engineering (ISY), Link\"{o}ping University, Link\"{o}ping, Sweden. C. Yuen (yuenchau@sutd.edu.sg) is with Singapore University of Technology and Design, Singapore. T. L. Marzetta (tom.marzetta@nyu.edu) is with the Department of Electrical and Computer Engineering, New York University, New York, USA. This work was supported in part by the Swedish Research Council (VR), the Swedish Foundation for Strategic Research (SSF), and ELLIIT.  Also, it was supported by A*Star SERC project number 142-02-00043.
\newline \indent Parts of this paper will be presented at IEEE ICASSP 2018 \cite{ICASSP18}.},
Emil Bj\"{o}rnson, Member, IEEE,
Erik~G.~Larsson, Fellow, IEEE,
Chau~Yuen, Senior Member, IEEE,
and~Thomas~L.~Marzetta, Fellow, IEEE}
	\vspace{-0.8cm}}

\maketitle

\begin{abstract}
	We study the joint unicast and multi-group multicast transmission in massive multiple-input-multiple-output (MIMO) systems. We consider a system model that accounts for channel estimation and pilot contamination, and derive achievable spectral efficiencies (SEs) for unicast and multicast user terminals (UTs), under maximum ratio transmission and zero-forcing precoding. For unicast transmission, our objective is to maximize the weighted sum SE of the unicast UTs, and for the multicast transmission, our objective is to maximize the minimum SE of the multicast UTs. These two objectives are coupled in a conflicting manner, due to their shared power resource. Therefore, we formulate a multiobjective optimization problem (MOOP) for the two conflicting objectives. We derive the Pareto boundary of the MOOP analytically. As each Pareto optimal point describes a particular efficient trade-off between the two objectives of the system, we determine the values of the system parameters (uplink training powers, downlink transmission powers, etc.) to achieve any desired Pareto optimal point. Moreover, we prove that the Pareto region is convex, hence the system should serve the unicast and multicast UTs at the same time-frequency resource. Finally, we validate our results using numerical simulations.	
	%We study the joint unicast and multi-group multicast transmission in massive multiple-input-multiple-output (MIMO) systems. We consider a system model that accounts for channel estimation, pilot contamination, and arbitrary path-losses and derive achievable spectral efficiencies (SEs) for unicast and multicast user terminals (UTs), under maximum ratio transmission and zero forcing precoding. For unicast transmission, our objective is to maximize the weighted sum SE of the unicast UTs, and for the multicast transmission, our objective is to maximize the minimum SE of the multicast UTs. These two problems are coupled in a conflicting manner, due to their shared power resource. To elaborate this conflict, we solve each problem separately and find its optimal solution in closed form, assuming that the other problem uses a fixed amount of power. Then, we formulate a multiobjective optimization problem (MOOP) for the two conflicting objectives. We derive the Pareto boundary of the MOOP analytically. As each Pareto optimal point describes a particular efficient trade-off between the two objectives of the system, we determine the values of the system parameters (uplink training powers, downlink transmission powers, etc.) to achieve any desired Pareto optimal point. Moreover, we prove that the Pareto region is convex, hence the system should serve the unicast and multicast UTs at the same time-frequency resource. Finally, we validate our results using numerical simulations.	
\end{abstract}

%-------------------------------------------------------------
%-------------------------------------------------------------
%-------------------------------------------------------------
%-------------------------------------------------------------
%-------------------------------------------------------------
%-------------------------------------------------------------
%-------------------------------------------------------------

\section{Introduction}
The global mobile data traffic is facing an unprecedented growth and new records are expected in the upcoming years, e.g., the monthly global mobile data traffic will exceed $50$ exabytes by $2021$ \cite{EricssonMobilityReport}. From this huge amount of data, a considerable portion is of common interest, e.g., live broadcast of sporting events, news headlines, massive software update, and popular videos. To provide efficient delivery of such data, the 3rd Generation Partnership Project (3GPP) incorporated the multimedia broadcast/multicast service (MBMS) in the third and fourth generations of cellular networks \cite{lecompte2012evolved}. This in turn has motivated a lot of research on physical layer multicasting \cite{sidiropoulos2006transmit,karipidis2008quality}.

In physical layer multicasting\footnote{Hereafter, for brevity we refer to physical layer multicasting as multicasting or multicast transmission.} we have multiple independent data streams, each of which is of interest for a group of user terminals (UTs). Each group is called a multicasting group and the purpose of multicast transmission is to utilize the channel state information (CSI) at the transmitter to optimize the transmission based on a desired performance metric \cite{sidiropoulos2006transmit,karipidis2008quality}. A common performance metric is the max-min fairness (MMF), where we want to maximize the minimum signal-to-interference-plus-noise-ratio (SINR) or spectral efficiency (SE) of the system given a limited transmit power \cite{sidiropoulos2006transmit,karipidis2008quality,christopoulos2014weighted,tran2014conic,christopoulos2015multicast,MeysamMultiComplexity,Zhengzheng2014,sadeghi2015multi,YangMulticat,GC2017,MeysamMMFTWC}. A first seminal treatment of multicast transmission is presented in \cite{sidiropoulos2006transmit}, where the MMF problem for single-group multicasting is studied. Therein it is proved that the problem is NP-hard and a semidefinite relaxation (SDR) method is presented to approximately solve the problem. The extension of this problem to multi-group multicasting is then studied in \cite{karipidis2008quality}. The multi-group multicasting under per-antenna power constrains is investigated in \cite{christopoulos2014weighted}. These works employ SDR-based algorithms to design the multicast transmission, which suffers from high computational complexity. Considering a single-cell system with an $N$ antennas base station (BS), the SDR-based methods have a complexity of $\mathcal{O}(N^{6.5})$ \cite{karipidis2008quality}. This complexity is highly prohibitive, especially if we have large dimensional systems, e.g., massive multiple-input-multiple-output (MIMO) systems.

Massive MIMO, due to its high energy and spectral efficiency \cite{marzetta2010noncooperative,hoydis2013massive,ngo2013energy,LSAPowerNorm}, is one of the key technologies for the fifth generation of cellular networks \cite{boccardi2014five}. Therefore, there has been an increasing interest in developing multicasting algorithms tailored for massive MIMO systems \cite{tran2014conic,christopoulos2015multicast,MeysamMultiComplexity,Zhengzheng2014,sadeghi2015multi,YangMulticat,GC2017,MeysamMMFTWC}. In \cite{tran2014conic}, a successive convex approximation based method is proposed to design the precoding vector for single-group massive MIMO multicasting, which is computationally less demanding than SDR-based methods. It is then extended to multi-group multicasting in \cite{christopoulos2015multicast}, where its complexity is $\mathcal{O}(N^{3.5})$. Recently, computationally efficient precoders are introduced in \cite{MeysamMultiComplexity} that resolve the complexity problem of massive MIMO multicasting. However, the proposed methods in \cite{tran2014conic,christopoulos2015multicast,MeysamMultiComplexity} require perfect CSI to be available at the BS and UTs, while imperfect CSI is the case of practical interest.

To jointly address both problems, the imperfect CSI and the high computational complexity of massive MIMO multicasting, two approaches have been presented in the literature \cite{Zhengzheng2014,sadeghi2015multi,YangMulticat,GC2017,MeysamMMFTWC}. The first approach exploits the asymptotic orthogonality of the channels in massive MIMO systems to simplify the MMF problem \cite{Zhengzheng2014,sadeghi2015multi}. However, it requires an extremely large number of antennas to provide a reasonable performance, e.g., $N>1000$ \cite{sadeghi2015multi,GC2017}. The insufficiency of the asymptotic approach is detailed in \cite{sadeghi2015multi,GC2017}. The other approach, i.e., \cite{YangMulticat,GC2017,MeysamMMFTWC}, uses the statistical properties of massive MIMO systems and a novel pilot assignment strategy (which will be explained in Section II.A) and present practical massive MIMO multicasting methods with reasonable number of BS antennas, e.g., $N > 100$ \cite{YangMulticat,MeysamMMFTWC}.

The aforementioned results are for pure multicast transmission, but a practical system should be able to simultaneously serve both unicast and multicast UTs. This has motivated the study of joint unicast and multicast transmission, e.g., \cite{Ding_noma_uni_mul,Lv_mul_cognitive,Zhang_NonOrtho_MBMS,zhang2016massiveUniMul,larsson2016joint}. In \cite{Ding_noma_uni_mul}, non-orthogonal multiple access (NOMA) empowered joint unicast and multicast transmission is studied. In \cite{Lv_mul_cognitive}, the application of NOMA for joint transmission of unicast primary user and multicast secondary users is studied. A similar approach is also used in \cite{Zhang_NonOrtho_MBMS} for MBMS transmission in heterogeneous networks, where tractable models for analyzing the performance of
NOMA-based MBMS transmission is presented. However, the joint unicast and multicast transmission problem has so far been overlooked in the literature on massive MIMO systems. The first step in this direction is presented in \cite{zhang2016massiveUniMul}, while considering perfect CSI at the BS and UTs. A more practical approach is presented in \cite{larsson2016joint}, where a joint beamforming and broadcasting technique for massive MIMO system is studied. However, \cite{larsson2016joint} neither considers multi-group multicasting nor studies the MMF of multicast UTs or the sum~SE~(SSE) of unicast UTs.

To the best of our knowledge, the coexistence of unicast and multi-group multicast transmission in massive MIMO systems, while accounting for the computational complexity, imperfect CSI, and pilot contamination, has not been studied in the literature. Studying such a system is challenging because for multicast UTs usually the desired objective is the MMF \cite{tran2014conic,christopoulos2015multicast,MeysamMultiComplexity,Zhengzheng2014,sadeghi2015multi,YangMulticat,GC2017,MeysamMMFTWC}, while for the unicast UTs it is usually their SSE that is considered \cite{SumRateMaxPowerAllTWC,LowComSumRateMassiveMIMO,DualitySumrateGaussianMIMO,SumRateTWC,SumPower,SumRateMaxTSP,sadeghiGC14austin}. Noting that we have a shared power, inter-user interference, and two conflicting objectives, one challenging aspect of such a problem is to find a rigorous definition of optimality. Moreover, for a given optimality measure, how should the resources be allocated to reach optimality? Should we set aside certain time-frequency resources for multicast transmission and the rest for unicast transmission, or should we spatially multiplex unicast and multicast transmission?

In this paper, we answer the aforementioned questions. More precisely, we consider a joint unicast and multi-group multicast single-cell massive MIMO system, while accounting for channel estimation, power control, pilot contamination, and an arbitrary path-loss. Under this system model and for the maximum ratio transmission (MRT) and zero-forcing (ZF) precoding schemes, we provide the following:
\begin{itemize}
	\item We derive closed-form expressions for the achievable SE of each unicast and multicast UT in the system.
	\item We formulate the problem of maximizing the SSE of the unicast UTs and also the MMF problem for the multicast UTs. Given a total available power at the BS and a fixed amount of power for unicast transmission, we derive the optimal value of the MMF problem, the optimal uplink pilots' powers, the optimal downlink payloads' powers, and the optimal pilot length of unicast UTs, all in closed form. Also given a fixed amount of power for multicast transmission, we derive the optimal value of SSE problem, the optimal uplink pilots' powers, the optimal downlink payloads powers, and the optimal pilot length of multicast UTs in closed form.
	\item We show that the MMF and SSE problem are coupled in a conflicting manner such that improving one would degrade the other one. Hence we formulate a multiobjective optimization problem (MOOP) for the joint unicast and multicast transmission.
	\item We solve this MOOP and derive its Pareto boundary analytically. Moreover, for any desired point on the Pareto boundary, we determine the value of each of the decision variables such that the performance of the desired Pareto boundary point is achieved.
	\item We prove that the Pareto region is convex. Hence, any point of the Pareto region can be obtained by spatial multiplexing of unicast and multicast UTs.
\end{itemize}

The rest of the paper is organized as follows. Section~II presents the system model. Section~III derives the achievable SEs for the unicast and multicast UTs in the system. Section~IV elaborates the SSE and MMF problems, their optimal solutions, and their conflicting behaviors. Section~V introduces their associated MOOP, derives its Pareto boundary, and show its attainable set is convex. Section~VI presents the numerical results. Finally, the paper is concluded in Section~VII.

\textit{Notations}: Scalars are denoted by lower case letters whereas boldface lower (upper) case letters are used for vectors (matrices). We denote by $\mathbf{I}_{N}$ the identity matrix of size $N$. The symbol $\mathcal{CN}(\mathbf{0},\mathbf{C})$ denotes the circularly symmetric complex Gaussian distribution with zero mean and variance $\mathbf{C}$. The transpose, conjugate transpose, and expectation operators are denoted by $(.)^T$ , $(.)^H$, and $\mathbb{E}[.]$, respectively.%and represent the element on row $t$ and column $s$ of matrix $\mathbf{N}$ as $\mathbf{N}_{st}$.  %The element-wise comparison between two vectors is denoted by $\mathbf{x} \preceq \mathbf{y}$, which means each element in $\mathbf{y}$ is greater than or equal to its corresponding element in $\mathbf{x}$. 

%-------------------------------------------------------------
%-------------------------------------------------------------
%-------------------------------------------------------------
%-------------------------------------------------------------
%-------------------------------------------------------------
%-------------------------------------------------------------
%-------------------------------------------------------------

\section{System Model}
We consider joint unicast and multi-group multicast transmission in a single-cell massive MIMO system. We assume the system has a BS with $N$ antennas and jointly serves $U$ single-antenna unicast UTs and $G$ multicasting groups, where the $g$th multicast group has $K_{g}$ single-antenna multicast UTs. Note that considering single-antenna UTs is a worst-case assumption since in multicast transmission we need to support many types of UTs, some having one and some having multiple antennas. Therefore it is relevant to consider single-antenna UTs. Further details are given in \cite[Sec. VII. E]{marzetta2010noncooperative}. We denote the set of indices of the $U$ unicast UTs as $\mathcal{U}$ and the set of indices of the $G$ multicasting groups as $\mathcal{G}$, i.e., $\mathcal{U}=\{ 1, \ldots, U\}$ and $\mathcal{G}=\{1, \ldots, G \}$. We also denote the set of indices of the $K_{g}$ multicast UTs of group $g$ as $\mathcal{K}_{g}$, i.e., $\mathcal{K}_{g} = \{ 1, \ldots, K_{g} \}$. We assume a UT in the system is either a unicast UT or a multicast UT. Therefore the total number of UTs in the system is $U + \sum_{g=1}^{G} K_{g}$ and these UTs are arbitrarily distributed in the system. Fig. \ref{Fig.sys} presents a joint unicast and multi-group multicast transmission in a single-cell massive MIMO system with $U=4$ unicast UTs and $G=3$ groups of multicast UTs, where the first, second, and third multicasting groups have $K_{1}=7$, $K_{2}=12$, $K_{3}=16$ multicast UTs.

\begin{figure}[]
	\centering
	\includegraphics[width=0.85 \columnwidth, trim={0cm 0cm 0cm 0cm},clip]{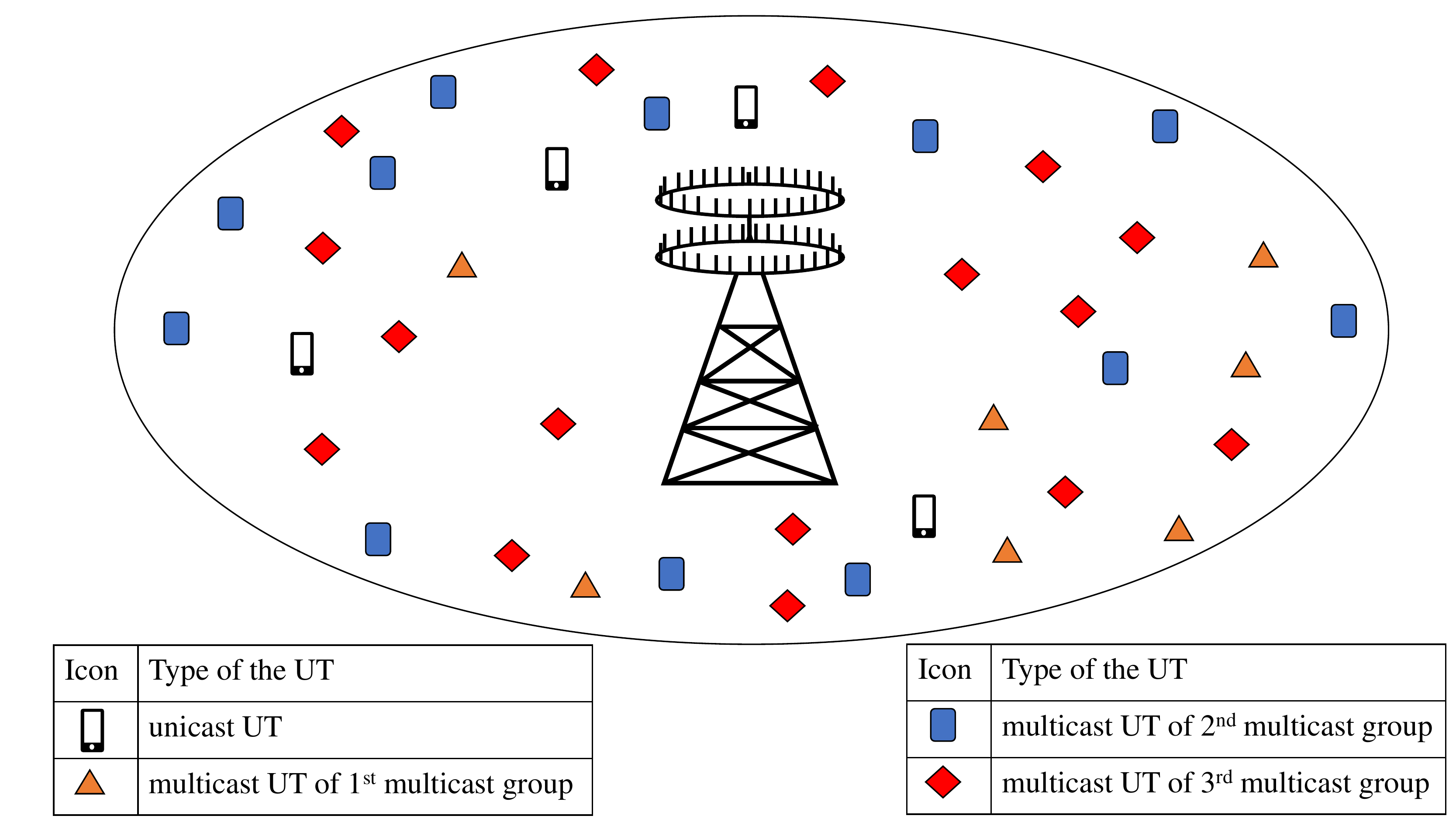}
	\caption{An example of the considered system model with $U\!\!=\!4$, $G=3$ with $K_{1}\!\!=\!7$, $K_{2}\!\!=\!12$, and $K_{3}\!\!=\!16$.}
	\label{Fig.sys}
\end{figure}

We consider a block fading channel model where the channels are assumed static within a coherence interval of $T$ symbols, where $T = C_{B} C_{T}$ with $C_{B}$ and $C_{T}$ being the coherence bandwidth and coherence time, respectively \cite{marzetta2016fundamentals}. We denote the channel response between the unicast UT $u$ and the BS as $\mathbf{f}_{u}$, and the channel response between the multicast UT $k$ in group $g$ and the BS as $\mathbf{g}_{gk}$. We consider uncorrelated Rayleigh fading channel responses for UTs, i.e., $\mathbf{f}_{u} \sim \mathcal{CN}(\mathbf{0},\beta_{u} \mathbf{I}_{N})$ and $\mathbf{g}_{gk} \sim \mathcal{CN}(\mathbf{0},\eta_{gk} \mathbf{I}_{N})$, where $\beta_{u}$ and $\eta_{gk}$ are the large-scale fading coefficients for the unicast and multicast UTs, respectively. Note that practical channels might have spatially correlated fading or line of sight components, but theoretical studies \cite{YangLOS} and practical measurements carried out in real massive MIMO propagation environments \cite{GaoMeasurements} have shown that the SE can be predicted using uncorrelated fading models. Moreover, the considered channel model enables us to present novel insights into joint unicast and multicast massive MIMO systems.

\subsection{Channel Estimation}
We assume the BS acquires CSI from uplink training using a time-division duplexing (TDD) protocol. In each coherence interval we have two phases: Uplink pilot transmission and downlink data transmission.\footnote{Note that the coherence interval has three parts: 1) pilot transmission, 2) uplink data transmission, 3) downlink data transmission. But as for multicast UTs we just have downlink data, we focus on the downlink data transmission. Note the inclusion of unicast uplink data transmission is trivial and do not change the analysis.} During uplink transmission the UTs send uplink pilots, which enables the BS to estimate the channels. Then the BS uses these channel estimates to perform downlink precoding, thanks to the reciprocity of the uplink and downlink channels. As the coherence interval has limited length, we cannot assign a dedicated orthogonal pilot to each UT in the system. In the classical unicast massive MIMO systems, it is assumed that the pilots are orthogonal within each cell and are reused in different cells. Although this approach is perfectly viable for unicast transmission, it cannot be used for multicast transmission because in multicast transmission we have a large number of UTs in each multicasting group, e.g., assume multicast transmission of TV channels where hundreds of users may watch a channel, which would exhaust the resources within each coherence interval. Therefore, for multicast UTs, we use the novel co-pilot assignment strategy proposed in \cite{YangMulticat}. This strategy assigns a shared pilot to all the UTs in each multicasting group. Therefore, the BS only requires $U+G$ pilots (rather than $U + \sum_{j=1}^{G} K_{j}$ pilots) to simultaneously serve the $U$ unicast UTs and also all the multicast UTs in the system.

We define a $\tau \times U$ pilot matrix $\mathbf{\Psi}_{un}= \sqrt{\tau} [\boldsymbol{\psi}_{un-1},\ldots,\boldsymbol{\psi}_{un-U}]$ for the unicast UTs, where $\boldsymbol{\psi}_{un-u}$ is a pilot of length $\tau$ symbols that is assigned to the $u$th unicast UT, and a $\tau \times G$ pilot matrix $\mathbf{\Psi}_{mu}= \sqrt{\tau} [\boldsymbol{\psi}_{mu-1},\ldots,\boldsymbol{\psi}_{mu-G}]$ for the multicast UTs, where $\boldsymbol{\psi}_{mu-g}$ is a pilot of length $\tau$ symbols that is assigned to every multicast UT $k$ that $k \in \mathcal{K}_{g}$. Due to the orthogonality of pilots we have $[\mathbf{\Psi}_{un},\mathbf{\Psi}_{mu}]^{H} [\mathbf{\Psi}_{un},\mathbf{\Psi}_{mu}] = \tau \mathbf{I}_{U+G}$, which enforces $ (U+G) \leq \tau \leq T$. Based on these conventions, the received uplink training signal at the BS is
\begin{align}
	\mathbf{Y} = \sum_{u=1}^{U} \sqrt{\tau p_{u}^{up}} \; \mathbf{f}_{u} \boldsymbol{\psi}_{un-u}^{T} + \sum_{g=1}^{G} \sum_{k=1}^{K_{g}} \sqrt{\tau q_{gk}^{up}} \;  \mathbf{g}_{gk} \boldsymbol{\psi}_{mu-g}^{T} + \mathbf{N}
\end{align}
where $\mathbf{N} \in \mathbb{C}^{N \times \tau}$ is the normalized additive noise with $[\mathbf{N}]_{ts} \sim \mathcal{CN}(0,1)$, $p_{u}^{up}$ is proportional to the uplink pilot power that has been used by unicast UT $u$, and $q_{jk}^{up}$ is proportional to the uplink pilot power that has been used by multicast UT $k$ of multicasting group $j$. By correlating $\mathbf{Y} $ with $\boldsymbol{\psi}_{un-u}^{*}$, we obtain the following
\begin{align}
\mathbf{y}_u = \sqrt{\tau p_u^{up}} \mathbf{f}_{u} + \mathbf{n}_u
\end{align}
where $\mathbf{n}_u \sim \mathcal{CN}(\mathbf{0},\mathbf{I}_N)$ is the normalized additive noise. Hence, the BS can estimate $\mathbf{f}_u$ using MMSE estimation as follows \cite[Sec. 3.1.3]{marzetta2016fundamentals}
\begin{align}
	\hat{\mathbf{f}}_{u} = \dfrac{\sqrt{\tau p_{u}^{up}} \beta_{u}}{1 + \tau p_{u}^{up} \beta_{u}} \left( \sqrt{\tau p_{u}^{up}} \; \mathbf{f}_{u} + \mathbf{n}_{u} \right) 
\end{align}
with $\mathbf{n}_{u} \sim \mathcal{CN}(\mathbf{0},\mathbf{I}_{N})$ is the normalized additive noise. Therefore, $\hat{\mathbf{f}}_{u} \sim \mathcal{CN}(\mathbf{0},\vartheta_{u} \mathbf{I}_{N})$ with $\vartheta_{u} = \dfrac{\tau p_{u}^{up} \beta_{u}^{2}}{1 + \tau p_{u}^{up} \beta_{u}}$ and the channel estimation error is $\tilde{\mathbf{f}}_{u} = \hat{\mathbf{f}}_{u} - \mathbf{f}_{u} \sim \mathcal{CN}(\mathbf{0},(\beta_{u} - \vartheta_{u})\mathbf{I}_{N}) $. We stack the estimated channel vectors between the BS and the $U$ unicast UTs in an $N \times U$ matrix $\hat{\mathbf{F}}= [\hat{\mathbf{f}}_{1}, \ldots,\hat{\mathbf{f}}_{U}]$. Using the same approach, i.e., correlating $\mathbf{Y} $ with $\boldsymbol{\psi}_{mu-g}^{*}$ and then using MMSE estimation \cite[Sec. 12]{kay1993fundamentals}, we can estimate $\mathbf{g}_{gk}$ with
\begin{align}
\label{est.mu}
	\hat{\mathbf{g}}_{gk} = \dfrac{\sqrt{\tau q_{gk}^{up}} \; \eta_{gk}}{1 + \sum_{t=1}^{K_{g}} \tau q_{gt}^{up} \eta_{gt}}  \left( \sum_{t=1}^{K_{g}} \sqrt{ \tau q_{gt}^{up} } \; \mathbf{g}_{gt} + \mathbf{n}_{g} \right)
\end{align}
where $\mathbf{n}_{g} \sim \mathcal{CN}(\mathbf{0},\mathbf{I}_{N})$ is the normalized additive noise. Therefore, $\hat{\mathbf{g}}_{gk} \sim \mathcal{CN}(\mathbf{0},\xi_{gk} \mathbf{I}_{N})$ with $\xi_{gk} =  (\tau q_{gk}^{up} \; \eta_{gk}^{2}) / ( 1 + \sum_{t=1}^{K_{g}} \tau q_{gt}^{up} \eta_{gt} )$. Also the channel estimation error is $\tilde{\mathbf{g}}_{gk} = \hat{\mathbf{g}}_{gk} - \mathbf{g}_{gk} \sim \mathcal{CN}(\mathbf{0},(\eta_{gk} - \xi_{gk})\mathbf{I}_{N}) $. Given \eqref{est.mu}, it is obvious that for every multicast group $g$ the channels estimates of the multicast UTs are equivalent up to a scalar coefficient. We also estimate $\mathbf{g}_{g} = \sum_{t=1}^{K_{g}} \sqrt{ \tau q_{gt}^{up} } \; \mathbf{g}_{gt}$, which is a linear combination of the channels of all multicast UTs within group $g$ and we have \cite{kay1993fundamentals,MeysamMMFTWC}
\begin{align}
\label{ghat}
	\hat{\mathbf{g}}_{g} =  \dfrac{ \sum_{t=1}^{K_{g}} \tau q_{gt}^{up} \eta_{gt}}{1 + \sum_{t=1}^{K_{g}} \tau q_{gt}^{up} \eta_{gt}}   \left( \sum_{t=1}^{K_{g}} \sqrt{ \tau q_{gt}^{up} } \; \mathbf{g}_{gt} + \mathbf{n}_{g} \right)
\end{align}
where $\hat{\mathbf{g}}_{g} \sim \mathcal{CN}(\mathbf{0}, \gamma_{g} \mathbf{I}_{N})$ with $\gamma_{g} = \dfrac{ \left( \sum_{t=1}^{K_{g}} \tau q_{gt}^{up} \eta_{gt} \right)^{2} }{1 + \sum_{t=1}^{K_{g}} \tau q_{gt}^{up} \eta_{gt}} $. We stack the $G$ composite channel vectors in an $N \times G$ matrix $\hat{\mathbf{G}} = [\hat{\mathbf{g}}_{1},\ldots,\hat{\mathbf{g}}_{G}]$. Note that $\hat{\mathbf{g}}_{gk}$ and $\hat{\mathbf{g}}_{g}$ are equal up to a scalar coefficient, i.e.,
\begin{align}
\label{ghat.to.ghatjk}
	\hat{\mathbf{g}}_{gk} = \dfrac{\sqrt{\tau q_{gk}^{up}} \; \eta_{gk}}{\sum_{t=1}^{K_{g}} \tau q_{gt}^{up} \eta_{gt}} \hat{\mathbf{g}}_{g}.
\end{align}

\subsection{Downlink Transmission}
Let us denote the data symbols for the $U$ unicast UTs as $\mathbf{x}=[x_{1},\ldots,x_{U}]^{T}$, where $\mathbf{x} \sim \mathcal{CN}(\mathbf{0},\mathbf{I}_{U})$. We denote the data symbols for the $G$ multicasting groups as $\mathbf{s}=[s_{1},\ldots,s_{G}]^{T}$, where $\mathbf{s} \sim \mathcal{CN}(\mathbf{0},\mathbf{I}_{G})$. Moreover, we assume $\mathbf{x}$ and $\mathbf{s}$ are independent. Then the signal received by $m$th unicast UT is
\begin{align}
\label{unicast-RXsignal}
	y_{m} = \mathbf{f}_{m}^{H} \left( \mathbf{V} \mathbf{x} + \mathbf{W} \mathbf{s} \right) + n_{m}
\end{align}
where $n_{m} \sim \mathcal{CN}(0,1)$ is the normalized additive noise, $\mathbf{V} = [\mathbf{v}_{1},\ldots,\mathbf{v}_{U}]$ is the $N \times U$ unicast precoding matrix with $\mathbf{v}_{m}$ being the precoding vector of $m$th unicast UT, and $\mathbf{W}=[\mathbf{w}_{1},\ldots,\mathbf{w}_{G}]$ is the $N \times G$ multicast precoding matrix with $\mathbf{w}_{j}$ being the precoding vector of $j$th multicasting group. The received signal by $k$th multicast UT in $j$th multicasting group is
\begin{align}
\label{multicast-RXsignal}
	z_{jk} = \mathbf{g}_{jk}^{H} \left( \mathbf{W} \mathbf{s} + \mathbf{V} \mathbf{x} \right) + n_{jk}
\end{align}
where $n_{jk} \sim \mathcal{CN}(0,1)$ is the normalized additive noise. Therefore the precoding matrix becomes an $N \times (U+G)$ matrix $[\mathbf{V},\mathbf{W}]$. We will detail the structure of $\mathbf{V}$ and $\mathbf{W}$ next.

\subsection{Precoders Structures}
The BS should perform precoding for both its unicast UTs and multicast UTs of the system. As simple linear precoding schemes like MRT and ZF provide close-to-optimal performance in single-cell massive MIMO systems \cite{marzetta2010noncooperative,hoydis2013massive,LSAPowerNorm}, and due to their good performance for multicast transmission \cite{MeysamMMFTWC}, in the sequel we consider only the MRT and ZF precoding schemes.

\subsubsection{MRT Precoding}
The precoding vector for $m$th unicast UT is
\begin{align}
\label{MRT-unicast}
	\mathbf{v}_{m}^{\mathrm{MRT}}=\sqrt{\frac{p_{m}^{dl}}{N \vartheta_{m}}} \;\; \hat{\mathbf{f}}_{m}  
\end{align}
where $p_{m}^{dl}$ is the downlink power of the unicast precoding vector, i.e., $\mathbb{E}[\Vert \mathbf{v}_{m} \Vert^{2}] = p_{m}^{dl}$. Also the precoding vector for $j$th multicast group is
\begin{align}
\label{MRT-multicast}
	\mathbf{w}_{j}^{\mathrm{MRT}} = \sqrt{\frac{q_{j}^{dl}}{N \gamma_{j}}} \;\; \hat{\mathbf{g}}_{j}
\end{align}
where $q_{j}^{dl}$ is the downlink power of the multicast precoding vector, i.e., $\mathbb{E}[\Vert \mathbf{w}_{j} \Vert^{2}] = q_{j}^{dl}$. We assume the total downlink available power at the BS is $P_{tot}$. Then the power allocated to the downlink unicast and multicast precoding vectors should meet the  condition
\begin{align}
\label{TotalPower}
     P_{un} + P_{mu} \leq P_{tot}
\end{align}
where $ P_{un}=\sum_{m=1}^{U} p_{m}^{dl}$ and $ P_{mu} = \sum_{j=1}^{G} q_{j}^{dl}$ are the precoding powers used for unicast and multicast transmissions at the BS, respectively.

\subsubsection{ZF Precoding}
Denote $\hat{\mathbf{C}} = [\hat{\mathbf{F}},\hat{\mathbf{G}}]$. The ZF precoding vectors for $m$ unicast UT and the $K_{j}$ multicast UTs in $j$th multicast group are
\begin{align}
\label{zf.uni}
\mathbf{v}_{m}^{\mathrm{ZF}} &= \sqrt{(N-G-U) p_{m}^{dl} \vartheta_{m}} \; \hat{\mathbf{C}} \left( \hat{\mathbf{C}}^{H} \hat{\mathbf{C}} \right)^{-1} \mathbf{e}_{m,U+G}
\\
\label{zf.mul}
\mathbf{w}_{j}^{\mathrm{ZF}} &=  \sqrt{(N-G-U) q_{j}^{dl} \gamma_{j}} \; \hat{\mathbf{C}} \left( \hat{\mathbf{C}}^{H} \hat{\mathbf{C}} \right)^{-1} \mathbf{e}_{U+j,U+G}
\end{align} 
where $\mathbf{e}_{i,U+G}$ is the $i$th column of $\mathbf{I}_{U+G}$, $\mathbb{E}[\Vert \mathbf{v}_{m}^{\mathrm{ZF}} \Vert^{2}] = p_{m}^{dl}$ and $\mathbb{E}[\Vert \mathbf{w}_{j}^{\mathrm{ZF}} \Vert^{2}] = q_{j}^{dl}$. Note that $\{ p_{m}^{dl} \}$ and $\{ q_{j}^{dl} \}$ should satisfy \eqref{TotalPower}.

%-------------------------------------------------------------
%-------------------------------------------------------------
%-------------------------------------------------------------
%-------------------------------------------------------------
%-------------------------------------------------------------
%-------------------------------------------------------------
%-------------------------------------------------------------

\section{Achievable Spectral Efficiencies}
In this section, we use a standard capacity bounding technique from the massive MIMO literature to derive rigorous achievable SEs of the UTs \cite{jose2011pilot,larsson2016joint,hoydis2013massive,ngo2013energy}. 
More precisely, each UT uses the signal received over its average effective channel (e.g., $\mathbb{E}[\mathbf{f}_{m}^{H}  \mathbf{v}_{m}]$ for unicast UT $m$) for signal detection while treating all other terms as noise \cite[Sec. 2.3]{marzetta2016fundamentals}, \cite[Sec. 4.3]{bjornson2017massive}. Note that this bound does not require channel hardening and it is an information theoretic technique that goes back to (at least) \cite{medard}.

\subsection{Achievable Spectral Efficiency for MRT Precoding}
Consider the unicast UTs. Starting from \eqref{unicast-RXsignal}, we can write the signal received by $m$th unicast UT as
\begin{align}
\label{uatf}
y_{m} =& \; \mathbb{E}[\mathbf{f}_{m}^{H}  \mathbf{v}_{m}] x_{m} + (\mathbf{f}_{m}^{H}  \mathbf{v}_{m} - \mathbb{E}[\mathbf{f}_{m}^{H}  \mathbf{v}_{m}])  x_{m}   + \!\! \sum_{u=1, u \neq m}^{U}  \mathbf{f}_{m}^{H}  \mathbf{v}_{u} x_{u} + \sum_{g=1}^{G} \mathbf{f}_{m}^{H} \mathbf{w}_{g} s_{g}  + n_{m}.
\end{align}
Now by applying the bounding technique \cite[Sec. 2.3.4]{marzetta2016fundamentals}, which treats the first term of \eqref{uatf} as the desired signal received over a deterministic channel and the remaining terms as effective noise, the following SE is achievable for the unicast UT $m$, for any precoding scheme
\begin{align}
\label{SE.MRT.un.gen}
\mathrm{SE}_{m,un} = \left(1 - \dfrac{\tau}{T} \right) \log_{2}(1 + \mathrm{SINR}_{m,un})
\end{align}
where
\begin{align}
\label{SINR.MRT.un.gen}
	\mathrm{SINR}_{m,un} =\dfrac{\vert \mathbb{E}[\mathbf{f}_{m}^{H}  \mathbf{v}_{m}] \vert^{2}}{1 - \vert \mathbb{E}[\mathbf{f}_{m}^{H}  \mathbf{v}_{m}] \vert^{2} + \sum_{u=1}^{U} \mathbb{E} [\vert \mathbf{f}_{m}^{H}  \mathbf{v}_{u} \vert^{2}] + \sum_{g=1}^{G} \mathbb{E} [\vert \mathbf{f}_{m}^{H} \mathbf{w}_{g}\vert^{2}] }. 
\end{align}
Insert \eqref{MRT-unicast} and \eqref{MRT-multicast} into \eqref{SINR.MRT.un.gen}. For the nominator we have
\begin{align}
	\mathbb{E}[\mathbf{f}_{m}^{H}  \mathbf{v}_{m}^{\mathrm{MRT}}] = \! \sqrt{\frac{p_{m}^{dl}}{N \vartheta_{m}}} \mathbb{E}[\mathbf{f}_{m}^{H} \hat{\mathbf{f}}_{m} ] = \! \sqrt{N \vartheta_{m} p_{m}^{dl}}.
\end{align}
Let us denote $\phi_{u} = p_{u}^{dl} / \left(N (1 +  \tau p_{u}^{up} \beta_{u})\right) $. Now considering the terms $\mathbb{E} [\vert \mathbf{f}_{m}^{H}  \mathbf{v}_{u}^{\mathrm{MRT}} \vert^{2}]$ we have
\begin{align}
\mathbb{E} [\vert \mathbf{f}_{m}^{H}  \mathbf{v}_{u}^{\mathrm{MRT}} \vert^{2}] =  \phi_{u} \; \mathbb{E} \left[ \mathbf{f}_{m}^{H}   \left( \! \sqrt{\tau p_{u}^{up}} \mathbf{f}_{u} \! + \! \mathbf{n}_{u} \right) \!\! \left( \! \sqrt{\tau p_{u}^{up}} \mathbf{f}_{u} \! + \! \mathbf{n}_{u} \right)^{\!H}  \!\! \mathbf{f}_{m} \right]\!. 
\end{align}
If $u \neq m$ then we have $\mathbb{E} [\vert \mathbf{f}_{m}^{H}  \mathbf{v}_{u} \vert^{2}] = p_{u}^{dl} \beta_{m} $. Otherwise ($u=m$), we have
\begin{align}
\label{inproof11}
\mathbb{E} [\vert \mathbf{f}_{m}^{H}  \mathbf{v}_{m}^{\mathrm{MRT}} \vert^{2}] \stackrel{(a)}{=} \phi_{m} \mathrm{tr}  \left( \mathbb{E} \left[ \mathbf{f}_{m} \mathbf{f}_{m}^{H} \mathbf{n}_{u}  \mathbf{n}_{u}^{H} \right] +  \tau p_{m}^{up} \mathbb{E} \left[ ( \mathbf{f}_{m} \mathbf{f}_{m}^{H})^{2} \right] \right) 
\stackrel{(b)}{=} p_{m}^{dl} \beta_{m} + N p_{m}^{dl} \vartheta_{m}
\end{align}
where $(a)$ is achieved by omitting independent terms that are resulting in zero expected value and $(b)$ follows from \cite[Lemma 2.10]{tulino2004random}. Now let us calculate the terms $\mathbb{E} [\vert \mathbf{f}_{m}^{H} \mathbf{w}_{g}^{\mathrm{MRT}} \vert^{2}]$. We have $\mathbb{E} [\vert \mathbf{f}_{m}^{H} \mathbf{w}_{g}^{\mathrm{MRT}} \vert^{2}] = \dfrac{q_{g}^{dl}}{N \gamma_{g}} \mathbb{E}[\mathbf{f}_{m}^{H} \hat{\mathbf{g}}_{g} \hat{\mathbf{g}}_{g}^{H}\mathbf{f}_{m}] = q_{g}^{dl} \beta_{m}$. So
\begin{align}
	\sum_{g=1}^{G} \mathbb{E} [\vert \mathbf{f}_{m}^{H} \mathbf{w}_{g}^{\mathrm{MRT}}\vert^{2}] = \sum_{g=1}^{G} q_{g}^{dl}  \beta_{m} = \beta_{m} P_{mu}.
\end{align}
Inserting the obtained results into \eqref{SE.MRT.un.gen} and \eqref{SINR.MRT.un.gen}, the achievable SE and the effective SINR of the unicast UT $m$ with MRT precoding are
\begin{align}
\label{SE.MRT.un}
\mathrm{SE}_{m,un}^{\mathrm{MRT}} &= \left(1 - \dfrac{\tau}{T} \right) \log_{2}(1+\mathrm{SINR}_{m,un}^{\mathrm{MRT}})
\\
\label{SINR.MRT.un}
\mathrm{SINR}_{m,un}^{\mathrm{MRT}} &= \dfrac{N p_{m}^{dl} \vartheta_{m} }{1  + \beta_{m} ( P_{un} + P_{mu})}.
\end{align}
Note that \eqref{SE.MRT.un} and \eqref{SINR.MRT.un}, explicitly explain the relation between the achievable SE of each unicast UT and the system parameters, while accounting for channel estimation and the imperfections of the multiplexing. Also, they have similar structure as the standard results for unicast transmission \cite[Chapter 3]{marzetta2016fundamentals}.

Now let us consider the multicast UTs. Starting from \eqref{multicast-RXsignal}, the signal received by $k$th UT in $j$th multicasting group can be written as follows
\begin{align}
z_{jk} =& \; \mathbb{E}[\mathbf{g}_{jk}^{H} \mathbf{w}_{j}] s_{j}
+ \left( \mathbf{g}_{jk}^{H} \mathbf{w}_{j} - \mathbb{E}[\mathbf{g}_{jk}^{H} \mathbf{w}_{j}] \right) s_{j} + \sum_{g=1, g \neq j}^{G} \mathbf{g}_{jk}^{H} \mathbf{w}_{g} s_{g} + \sum_{u=1}^{U} \mathbf{g}_{jk}^{H} \mathbf{v}_{u} x_{u}  + n_{jk}.
\end{align}
By applying the same bounding technique as \cite[Sec. 2.3]{marzetta2016fundamentals}, the following SE is achievable for the multicast UT $k$ in $j$th multicasting group
\begin{align}
\label{SE.MRT.mu.gen}
\mathrm{SE}_{jk,mu} = \left(1 - \dfrac{\tau}{T} \right) \log_{2}(1 + \mathrm{SINR}_{jk,mu})
\end{align}
where
\begin{align}
	\label{SINR.MRT.mu.gen}
	\mathrm{SINR}_{jk,mu} = \dfrac{ \left| \mathbb{E}[\mathbf{g}_{jk}^{H} \mathbf{w}_{j} ] \right| ^{2}  }{ 1 - \left| \mathbb{E}[\mathbf{g}_{jk}^{H} \mathbf{w}_{j}]  \right| ^{2} + \sum_{g=1}^{G} \mathbb{E} \left[ \left| \mathbf{g}_{jk}^{H} \mathbf{w}_{g} \right|^{2} \right] +  \sum_{u=1}^{U} \mathbb{E} [ \vert \mathbf{g}_{jk}^{H} \mathbf{v}_{u} \vert^{2} ] }. 
\end{align}
Insert \eqref{MRT-unicast} and \eqref{MRT-multicast} into \eqref{SINR.MRT.mu.gen}. Define  $\lambda_{g} = \dfrac{q_{g}^{dl} }{1+\sum_{t=1}^{K_{g}} \tau q_{gt}^{up} \eta_{gt}}$. Then for the terms $\mathbb{E}[\mathbf{g}_{jk}^{H} \mathbf{w}_{j}^{\mathrm{MRT}}]$, we have
\begin{align}
\label{ExpectedVal}
\mathbb{E}[\mathbf{g}_{jk}^{H} \mathbf{w}_{j}^{\mathrm{MRT}}] &=\sqrt{\frac{\lambda_{j}}{N}} \; \mathbb{E} \left[ \mathbf{g}_{jk}^{H} \left( \sum_{t=1}^{K_{j}}  \sqrt{\tau q_{jt}^{up}}  \mathbf{g}_{jt} + \mathbf{n}_{j} \right) \right] = \sqrt{N \tau  \lambda_{j} q_{jk}^{up}} \eta_{jk} = \sqrt{N q_{j}^{dl} \xi_{jk} }.
\end{align}
Based on \eqref{ExpectedVal}, the desired signal power of a typical UT $k$ in $j$th multicast group becomes $N q_{j}^{dl} \xi_{jk}$. Note that $\xi_{jk}$ quantitatively express the effect of using a shared pilot for all the multicast UTs in group $j$. More precisely, it clearly determines how the desired signal power of UT $k$ in group $j$ reduces by adding more UTs in this multicasting group. Also for the terms $ \sum_{g=1}^{G} \mathbb{E} \left[ \left| \mathbf{g}_{jk}^{H} \mathbf{w}_{g}^{\mathrm{MRT}} \right|^{2} \right]$ we have
\begin{align}
\label{aidmeysam1}
\sum_{g=1}^{G} \mathbb{E} \left[ \left| \mathbf{g}_{jk}^{H} \mathbf{w}_{g}^{\mathrm{MRT}} \right|^{2} \right]
\stackrel{(c)}{=} \sum_{g=1}^{G}  \frac{\lambda_{g}}{N}  \left( \sum_{t=1}^{K_{g}} \sum_{t^{\prime}=1}^{K_{g}} \tau \sqrt{ q_{gt}^{up} q_{gt^{\prime}}^{up}} \mathbb{E} \left[ \mathbf{g}_{jk}^{H} \mathbf{g}_{gt} \mathbf{g}_{gt^{\prime}}^{H} \mathbf{g}_{jk} \right] +  \mathbb{E} [ \mathbf{g}_{jk}^{H}  \mathbf{n}_{g} \mathbf{n}_{g}^{H} \mathbf{g}_{jk} ]\right)  
\end{align}
where $(c)$ is obtained by utilizing independency of noise and UTs' channels, and inserting \eqref{ghat} and \eqref{MRT-multicast}. Continuing from \eqref{aidmeysam1} and after straightforward calculation it reduces to $\eta_{jk} P_{mu} + N \tau  \lambda_{j} q_{jk}^{up}  \eta_{jk}^{2}$. Now let us consider the terms $ \mathbb{E} [ \vert \mathbf{g}_{jk}^{H} \mathbf{v}_{u}^{\mathrm{MRT}} \vert^{2} ]$, we have $\mathbb{E} [ \vert \mathbf{g}_{jk}^{H} \mathbf{v}_{u}^{\mathrm{MRT}} \vert^{2} ] = \frac{p_{u}^{dl}}{N \vartheta_{u}} \mathbb{E} [ \mathbf{g}_{jk}^{H} \hat{\mathbf{f}}_{u}  \hat{\mathbf{f}}_{u}^{H} \mathbf{g}_{jk} ]=  \eta_{jk} p_{u}^{dl}$. Hence $ \sum_{tu=1}^{U} \mathbb{E} [ \vert \mathbf{g}_{jk}^{H} \mathbf{v}_{u}^{\mathrm{MRT}} \vert^{2} ] =  \sum_{u=1}^{U} \eta_{jk} p_{u}^{dl}  = \eta_{jk} P_{un} $. Inserting the obtained results into \eqref{SE.MRT.mu.gen} and \eqref{SINR.MRT.mu.gen}, the achievable SE and the effective SINR of multicast UT $k$ in group $j$ with MRT precoding are
\begin{align}
\label{SE.MRT.mu}
\mathrm{SE}_{jk,mu}^{\mathrm{MRT}} &= \left(1 - \dfrac{\tau}{T} \right) \log_{2}(1+\mathrm{SINR}_{jk,mu}^{\mathrm{MRT}})
\\
\label{SINR.MRT.mu}
\mathrm{SINR}_{jk,mu}^{\mathrm{MRT}} &= \dfrac{N  q_{j}^{dl}  \xi_{jk}  }{1  + \eta_{jk} (P_{mu} + P_{un})}.
\end{align}
Note that \eqref{SE.MRT.mu} (and \eqref{SINR.MRT.mu}) follows similar structure as \eqref{SE.MRT.un} (and \eqref{SINR.MRT.un}). Considering the numerator of \eqref{SINR.MRT.mu} (or \eqref{SINR.MRT.un}), we see that  the desired signal is proportional to $N$, which is due to coherent beamforming. Also, it is proportional to $q_{j}^{dl}  \xi_{jk}$ ($p_{m}^{dl} \vartheta_{m}$), which shows the effect of power allocation and the effective channel gain experienced by the UT. Considering the denominator, we note that the interference experienced by the UT due to joint unicast and multicast transmission has seamlessly concentrated to $\eta_{jk}(P_{mu}+P_{un})$ (or to $(\beta_{m}(P_{mu}+P_{un}))$) and the effect of normalized noise is shown by $1$. The achieved neat and tidy expressions enable us do further analysis and provide much more insights, as will be detailed in Section IV and V.

\subsection{Achievable Spectral Efficiency for ZF Precoder}
Consider the unicast UTs. Starting from \eqref{unicast-RXsignal} and applying \eqref{zf.uni}, \eqref{zf.mul}, and $ \mathbf{f}_{u} = \hat{\mathbf{f}}_{u} - \tilde{\mathbf{f}}_{u}$, we can write the signal received by $m$th unicast UT as
\begin{align}
y_{m}  = \; \sqrt{(N-G-U) p_{m}^{dl}\vartheta_{m}} \; x_{m} - \sum_{u=1}^{U} \tilde{\mathbf{f}}_{m}^{H} \mathbf{v}_{u}^{\mathrm{ZF}} x_{u}  - \sum_{g=1}^{G} \tilde{\mathbf{f}}_{m}^{H} \mathbf{w}_{g}^{\mathrm{ZF}} s_{g} + n_{m}. 
\end{align}
Therefore, the effective SINR of unicast UT $m$ becomes \cite[Sec. 2.3]{marzetta2016fundamentals}
\begin{align}
\mathrm{SINR}_{m,un}^{\mathrm{ZF}} = \dfrac{(N-G-U) p_{m}^{dl}\vartheta_{m}}{1 + \sum_{u=1}^{U} \mathbb{E}[\vert \tilde{\mathbf{f}}_{m}^{H} \mathbf{v}_{u}^{\mathrm{ZF}} \vert^{2}] + \sum_{g=1}^{G}  \mathbb{E}[\vert \tilde{\mathbf{f}}_{m}^{H} \mathbf{w}_{g}^{\mathrm{ZF}} \vert^{2} ] }.
\end{align}
Now for the term $\mathbb{E}[\vert \tilde{\mathbf{f}}_{m}^{H} \mathbf{v}_{u}^{\mathrm{ZF}} \vert^{2}]$ we have
\begin{align}
& \mathbb{E}[\vert \tilde{\mathbf{f}}_{m}^{H} \mathbf{v}_{u}^{\mathrm{ZF}} \vert^{2}] \! \stackrel{(d)}{=} \!
\tr \left( \mathbb{E} [\tilde{\mathbf{f}}_{m} \tilde{\mathbf{f}}_{m}^{H}]  \mathbb{E} [\mathbf{v}_{u}^{\mathrm{ZF}} \mathbf{v}_{u}^{\mathrm{ZF}H}] \right) \! = \! (\beta_{m} - \vartheta_{m}) p_{u}^{dl}
\end{align}
%------------------------------------------------------------------------------------
%------------------------------------------------------------------------------------
% for more details on the above equation and why it is like that

%-------------------------------------------------------------------------------------
%------------------------------------------------------------------------------------
where in $(d)$ we used the independency of $\tilde{\mathbf{f}}_{m}$ and $\mathbf{v}_{u}^{\mathrm{ZF}}$. Now for $\mathbb{E}[\vert \tilde{\mathbf{f}}_{m}^{H} \mathbf{w}_{g} \vert^{2}]$ we have
\begin{align}
& \mathbb{E}[\vert \tilde{\mathbf{f}}_{m}^{H} \mathbf{w}_{g}^{\mathrm{ZF}} \vert^{2}] \! \stackrel{(e)}{=} \! \tr \left( \mathbb{E}[ \tilde{\mathbf{f}}_{m} \tilde{\mathbf{f}}_{m}^{H}] \mathbb{E} [\mathbf{w}_{g}^{\mathrm{ZF}} \mathbf{w}_{g}^{\mathrm{ZFH}}] \right) = (\beta_{m} - \vartheta_{m}) q_{g}^{dl}
\end{align}
%------------------------------------------------------------------------------------
%------------------------------------------------------------------------------------
% for more details on the above equation and why it is like that
%-------------------------------------------------------------------------------------
%------------------------------------------------------------------------------------ 
where in $(e)$ we used the independency of $\tilde{\mathbf{f}}_{m}$ and $\mathbf{w}_{g}^{\mathrm{ZF}}$. Therefore, the achievable SE and the effective SINR of unicast UT $m$ with ZF precoding are
\begin{align}
\label{SE.ZF.un}
\mathrm{SE}_{m,un}^{\mathrm{ZF}} &= \left(1 - \dfrac{\tau}{T} \right) \log_{2}(1+\mathrm{SINR}_{m,un}^{\mathrm{ZF}})
\\
\label{SINR.ZF.un}
\mathrm{SINR}_{m,un}^{\mathrm{ZF}} &= \dfrac{(N-G-U) p_{m}^{dl}\vartheta_{m}}{1 + (\beta_m - \vartheta_{m}) \left( P_{un} +  P_{mu} \right) }.  
\end{align}
Note that \eqref{SE.ZF.un} and \eqref{SINR.ZF.un} clearly show the relation between the achievable SE and the system parameters. For example, \eqref{SINR.ZF.un} shows that the coherent beamforming gain is $N-G-U$, the effect of power allocation and the effective channel gain experienced by the UT is $p_{m}^{dl}\vartheta_{m}$, and the interference experienced by the UT due to joint unicast and multicast transmission is $(\beta_{m} - \vartheta_{m}) (P_{un} + P_{mu})$.

Now let us consider the multicast UTs. The signal received by the $k$th multicast UT in $j$th multicast group is
\begin{align}
y_{jk} &= \left( \hat{\mathbf{g}}_{jk} - \tilde{\mathbf{g}}_{jk} \right)^{H} \left( \mathbf{V}^{\mathrm{ZF}} \mathbf{x} + \mathbf{W}^{\mathrm{ZF}} \mathbf{s} \right) + n_{jk}  \stackrel{(f)}{=} \varrho_{jk} \; s_{j} - \sum_{u=1}^{U} \tilde{\mathbf{g}}_{jk}^{H} \mathbf{v}_{u}^{\mathrm{ZF}} x_{u} - \sum_{g=1}^{G} \tilde{\mathbf{g}}_{jk}^{H} \mathbf{w}_{g}^{\mathrm{ZF}} s_{g} + n_{jk} 
\end{align}
where in $(f)$ we used \eqref{ghat.to.ghatjk} and $\varrho_{jk} = \eta_{jk} \sqrt{\dfrac{ (N\!-\!G\!-\!U) \tau q_{jk}^{up} q_{j}^{dl} }{1+\sum_{t=1}^{K_{j}} \tau q_{jt}^{up} \eta_{jt}}}$. Therefore under ZF precoding the effective SINR of multicast UT $k$ in group $j$ becomes \cite[Sec. 2.3]{marzetta2016fundamentals} 
\begin{align}
\mathrm{SINR}_{jk,mu}^{\mathrm{ZF}} \!=\! \dfrac{\varrho_{jk}^{2}}{1 \!+ \sum_{u=1}^{U} \mathbb{E}[\vert \tilde{\mathbf{g}}_{jk}^{H} \mathbf{v}_{u}^{\mathrm{ZF}} \vert^{2}] + \sum_{g=1}^{G}  \mathbb{E}[\vert \tilde{\mathbf{g}}_{jk}^{H} \mathbf{w}_{g}^{\mathrm{ZF}} \vert^{2} ] }.
\end{align}
Now for the term $\mathbb{E}[\vert \tilde{\mathbf{g}}_{jk}^{H} \mathbf{v}_{u}^{\mathrm{ZF}} \vert^{2}]$ we have
\begin{align}
\mathbb{E}[\vert \tilde{\mathbf{g}}_{jk}^{H} \mathbf{v}_{u}^{\mathrm{ZF}} \vert^{2}] 
&\stackrel{(g)}{=}
\tr \left( \mathbb{E} \left[ \tilde{\mathbf{g}}_{jk} \tilde{\mathbf{g}}_{jk}^{H} \right] \mathbb{E} \left[ \tilde{\mathbf{v}}_{u}^{\mathrm{ZF}} \tilde{\mathbf{v}}_{u}^{\mathrm{ZF}H} \right] \right) = (\eta_{jk} - \xi_{jk}) p_{u}^{dl}
\end{align}
where $(g)$ is due to the independency of $\tilde{\mathbf{g}}_{jk}$ and $\tilde{\mathbf{v}}_{u}^{\mathrm{ZF}}$. Also for $\mathbb{E}[\vert \tilde{\mathbf{g}}_{jk}^{H} \mathbf{w}_{g}^{\mathrm{ZF}} \vert^{2} ]$ we have
\begin{align}
\mathbb{E}[\vert \tilde{\mathbf{g}}_{jk}^{H} \mathbf{w}_{g}^{\mathrm{ZF}} \vert^{2}] 
&\stackrel{(h)}{=} 
\tr \left( \mathbb{E} \left[ \tilde{\mathbf{g}}_{jk} \tilde{\mathbf{g}}_{jk}^{H} \right] \mathbb{E} \left[ \tilde{\mathbf{w}}_{g}^{\mathrm{ZF}} \tilde{\mathbf{w}}_{g}^{\mathrm{ZF}H} \right] \right) =  (\eta_{jk} - \xi_{jk}) q_{g}^{dl}
\end{align}
where $(h)$ is due to the independency of $\tilde{\mathbf{g}}_{jk}$ and $\tilde{\mathbf{w}}_{g}^{\mathrm{ZF}}$. Therefore, the achievable SE and the effective SINR of UT $k$ in group $j$ with ZF precoding are
\begin{align}
\label{SE.ZF.mu}
\mathrm{SE}_{jk,mu}^{\mathrm{ZF}} &= \left(1 - \dfrac{\tau}{T} \right) \log_{2}(1+\mathrm{SINR}_{jk,mu}^{\mathrm{ZF}})
\\
\label{SINR.ZF.mu}
\mathrm{SINR}_{jk,mu}^{\mathrm{ZF}} &= \dfrac{(N-G-U) q_{j}^{dl}  \xi_{jk} }{ 1 +  (\eta_{jk} - \xi_{jk}) (P_{un} + P_{mu}) }.
\end{align}
Note that \eqref{SINR.ZF.mu} has a similar structure as \eqref{SINR.ZF.un}, and the same remarks apply here. It is also interesting to note the difference between \eqref{SINR.ZF.mu} (or \eqref{SINR.ZF.un}) and \eqref{SINR.MRT.mu} (or \eqref{SINR.MRT.un}). As we switch from \eqref{SINR.MRT.mu} to \eqref{SINR.ZF.un}, the coherent beamforming gain reduces from $N$ to $N-G-U$. This is due to the fact that ZF uses the available degrees of freedom to cancel the interference, by placing the signals in the null-space, at the cost of reducing the beamforming gain. This partially removes the interference, e.g., $\xi_{jk}(P_{un}+P_{mu})$ in \eqref{SINR.ZF.mu}.

%-------------------------------------------------------------
%-------------------------------------------------------------
%-------------------------------------------------------------
%-------------------------------------------------------------
%-------------------------------------------------------------
%-------------------------------------------------------------
%-------------------------------------------------------------

\section{Optimal Resource Allocation}
Having closed form expressions for the achievable SEs of unicast and multicast UTs under MRT and ZF precoding, it is interesting to see how the resources, i.e., the $T$ symbols of the coherence interval, the uplink power of the UTs, and the downlink power at the BS, should be allocated to achieve an optimal performance. In the context of multicasting, the common metric of interest is MMF, where the objective is to allocate the resources to maximize the minimum SE (or SINR) among all the multicast UTs, subject to a total transmit power constraint. Considering our system model, the MMF problem for multicast UTs is stated as
\begin{align}
\label{MMF}
	\mathcal{P}1: \underset{{\{q_{j}^{dl}\},\{q_{jk}^{up}\}, \tau}}{\text{maximize}}  \; \min_{j \in \mathcal{G}, k \in \mathcal{K}_{j}}  \;\;& \mathrm{SE}_{jk,mu}^{\dagger}
	\\
	s.t. \; & \sum_{j=1}^{G} q_{j}^{dl}   \leq P - P_{un}  \tag{\ref{MMF}-i}
	\\
	& 0 \leq q_{j}^{dl}   \tag{\ref{MMF}-ii}
	\\
	& 0 \leq \tau q_{jk}^{up} \leq E_{jk} \tag{\ref{MMF}-iii}
	\\
	& \tau \in \{U\!\!+\!G,\ldots,T \} \tag{\ref{MMF}-iv}
\end{align}
where $\dagger$ is either $\mathrm{MRT}$ or $\mathrm{ZF}$, $P$ is the total downlink power at the BS, and $P_{un}$ is a given fixed quantity. Also $E_{jk}$ is the maximum energy limit of the multicast UT $k$ in group $j$ per pilot transmission. Hereafter we denote an objective value of $\mathcal{P}1$ that is obtained for a set of feasible decision variables of $\mathcal{P}1$ by $O_{mu}$ and we denote its optimal objective value by $O_{mu}^{*}$.

\begin{theorem}
	\label{multicastOpt}
	For a fixed value of $P_{un}$ where $0 \leq P_{un} \leq P$ and MRT precoding, at the optimal solution of $\mathcal{P}1$ all the multicast UTs will have equal SEs, i.e., $O_{mu}^{*}(P_{un}) = \mathrm{SE}_{jk} \; \forall j,k$, which is the optimal objective value of $\mathcal{P}1$ and it is equal to 
	\begin{align}
	\label{O1}
	&O_{mu}^{*} (P_{un}) = \left( 1 - \dfrac{U\!\!+\!G}{T} \right) \log_{2} \left( 1 + \Gamma \right)  
	\end{align}
	where 
		\begin{align}
	\Gamma = \dfrac{N P_{mu}}{ \sum_{j=1}^{G} \frac{1}{\Upsilon_{j}}  + \sum_{j=1}^{G} \sum_{t=1}^{K_{j}} \frac{1}{\eta_{jt}} + P \sum_{j=1}^{G} K_{j}}
	\end{align}
	with $\Upsilon_{j} = \min_{k \in \mathcal{K}_{j}} \frac{E_{jk} \eta_{jk}^{2}}{1+\eta_{jk}P}$ and $P_{mu} = P - P_{un}$. Also the optimal values of decision variables are $\tau^{*} = U\!\!+\!G$, $q_{jk}^{up*} = \frac{1+ \eta_{jk}P}{(U\!\!+\!G)\eta_{jk}^{2}} \Upsilon_{j}$,	
	$q_{j}^{dl*} = \frac{\Gamma}{N \Upsilon_{j}}  (1 + \sum_{t=1}^{K_{j}} x_{jt}^{*} \eta_{jt})$ with $x_{jk}^{*} = \frac{1+ \eta_{jk}P}{\eta_{jk}^{2}} \Upsilon_{j}$.
\end{theorem}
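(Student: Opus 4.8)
The plan is to separate the three families of decision variables by exploiting that in the SINR expression \eqref{SINR.MRT.mu} the pilot quantities enter only through the products $x_{jk}:=\tau q_{jk}^{up}$. First I would rewrite $\xi_{jk}$, and hence $\mathrm{SINR}_{jk,mu}^{\mathrm{MRT}}$, in terms of $\{x_{jk}\}$ and observe that the uplink energy constraint $0\le\tau q_{jk}^{up}\le E_{jk}$ becomes simply $0\le x_{jk}\le E_{jk}$, which is \emph{independent} of $\tau$. Since any $x_{jk}\in[0,E_{jk}]$ is realizable for every admissible $\tau$, the achievable SINR does not depend on $\tau$, whereas the prefactor $1-\tau/T$ in \eqref{SE.MRT.mu} is strictly decreasing in $\tau$; hence every optimal solution uses the shortest pilots, $\tau^{*}=U+G$. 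Because $\log_2(1+\cdot)$ and this prefactor are monotone, maximizing the minimum SE is equivalent to maximizing the minimum SINR, which I analyze henceforth.

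Next I would fix the total downlink power $I:=P_{un}+P_{mu}$, so the interference factor $1+\eta_{jk}I$ in \eqref{SINR.MRT.mu} is frozen, and solve the allocation in two nested stages. For the within-group stage, fix $j$ and $q_j^{dl}$ and write $\mathrm{SINR}_{jk,mu}^{\mathrm{MRT}}=\frac{Nq_j^{dl}}{1+S_j}\,t_{jk}$, where $t_{jk}:=\frac{x_{jk}\eta_{jk}^2}{1+\eta_{jk}I}$ and $S_j:=\sum_t x_{jt}\eta_{jt}$. Since $1/(1+S_j)$ is common to the group while $S_j$ increases in every $x_{jt}$, reducing any $t_{jk}$ lying strictly above $\min_k t_{jk}$ lowers $S_j$ without lowering the group minimum; thus at optimality all $t_{jk}$ equal a common $c_j$. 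Writing $S_j=c_j\sigma_j$ with $\sigma_j:=\sum_t\frac{1+\eta_{jt}I}{\eta_{jt}}$, the group value $\frac{c_j}{1+c_j\sigma_j}$ is increasing in $c_j$, so $c_j$ is driven up to the first active box constraint, giving $c_j=\min_k\frac{E_{jk}\eta_{jk}^2}{1+\eta_{jk}I}$ and $x_{jk}^{*}=\frac{1+\eta_{jk}I}{\eta_{jk}^2}\,c_j$. This is the equalization claim within a single group and, at $I=P$, reproduces the stated $x_{jk}^{*}$ and $q_{jk}^{up*}=x_{jk}^{*}/\tau^{*}$.

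For the across-group stage I would substitute the optimal group value, reducing the problem to maximizing $\min_j\frac{Nq_j^{dl}c_j}{1+S_j}$ over $\{q_j^{dl}\ge0\}$ subject to $\sum_j q_j^{dl}=I-P_{un}$. Each term is increasing and linear in its own $q_j^{dl}$, so the standard max-min argument equalizes all groups to a common $\Gamma$; solving $\frac{Nq_j^{dl}c_j}{1+S_j}=\Gamma$ and summing against the budget yields $\Gamma=N(I-P_{un})\big/\sum_j\frac{1+S_j}{c_j}$. Using $S_j=c_j\sigma_j$, a short computation collapses $\sum_j\frac{1+S_j}{c_j}=\sum_j\frac1{c_j}+\sum_j\sum_t\frac1{\eta_{jt}}+I\sum_jK_j$, which at $I=P$ is exactly the denominator of the target $\Gamma$. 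Together the two stages show every UT attains SINR $\Gamma$, establishing the equal-SE claim, and back-substitution recovers $q_j^{dl*}=\frac{\Gamma}{N\Upsilon_j}(1+\sum_t x_{jt}^{*}\eta_{jt})$ and the value $O_{mu}^{*}(P_{un})$ in \eqref{O1}.

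The remaining and, I expect, hardest step is to prove the downlink budget is exhausted, $P_{mu}=P-P_{un}$ (i.e.\ $I=P$), since raising $I$ simultaneously increases the numerator and, through both the explicit $1+\eta_{jk}I$ term and the gains $c_j(I)$, the denominator. I would resolve this by treating $\Gamma$ as a function of $I$: each $1/c_j(I)=\max_k\frac{1+\eta_{jk}I}{E_{jk}\eta_{jk}^2}$ is a maximum of increasing affine functions, so the denominator $D(I)$ is piecewise affine with positive slope, and on every piece $\Gamma(I)=\frac{N(I-P_{un})}{\tilde A+\tilde B I}$ has derivative of the sign of $\tilde A+\tilde B P_{un}>0$. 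Continuity of $D$ at the breakpoints then makes $\Gamma(I)$ strictly increasing on $[P_{un},P]$, forcing $I=P$; substituting $I=P$ turns $c_j$ into $\Upsilon_j=\min_k\frac{E_{jk}\eta_{jk}^2}{1+\eta_{jk}P}$ and completes the proof.
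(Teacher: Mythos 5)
Your proposal is correct and follows essentially the same route as the paper's Appendix~A: the reduction to the variables $x_{jk}=\tau q_{jk}^{up}$ forcing $\tau^{*}=U+G$, the within-group and across-group SINR-equalization arguments, and the resulting closed forms for $x_{jk}^{*}$, $\Upsilon_j$, $q_j^{dl*}$ and $\Gamma$ all match. The one place you diverge is the full-power step: the paper disposes of it with a one-line scaling contradiction (replace every $q_j^{dl}$ by $\theta q_j^{dl}$ with $\theta=(P-P_{un})/P_{mu}>1$, so each SINR denominator becomes $\theta^{-1}(1+\eta_{jk}P_{un})+\eta_{jk}P_{mu}$ and strictly decreases), which is valid before any equalization and avoids your piecewise-affine monotonicity analysis of $\Gamma(I)$ --- your version works, but is considerably more laborious than necessary.
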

\begin{proof}
	The proof is given in the Appendix A.
\end{proof}

\begin{theorem}
	\label{multicastOptZF}
	For a fixed value of $P_{un}$ where $0 \leq P_{un} \leq P$ and ZF precoding, at the optimal solution of $\mathcal{P}1$ all the multicast UTs will have equal SEs, i.e., $O_{mu}^{*}(P_{un}) = \mathrm{SE}_{jk} \; \forall j,k$, which is the optimal objective value of $\mathcal{P}1$ and it is equal to 
	\begin{align}
	\label{O1ZF}
	O_{mu}^{*} (P_{un}) = \left( 1 - \dfrac{U\!\!+\!G}{T} \right) \log_{2} \! \left( \! 1 \! + \! \dfrac{(N-G-U)P_{mu}}{  \sum_{j=1}^{G} \frac{1}{\Upsilon_{j}}  + \! \sum_{j=1}^{G} \sum_{t=1}^{K_{j}} \frac{1}{\eta_{jt}} + \! P \sum_{j=1}^{G} K_{j} - PG} \right)
	\end{align}
	where $\Upsilon_{j} = \min_{k \in \mathcal{K}_{j}} \frac{E_{jk} \eta_{jk}^{2}}{1+\eta_{jk}P}$ and $P_{mu}=P - P_{un}$. The optimal values of decision variables are $\tau^{*} = U+G$, $q_{jk}^{up*} = \frac{1+ \eta_{jk} P}{(U+G)\eta_{jk}^{2}} \Upsilon_{j}$, and $q_{j}^{dl*} = \left( \sum_{g=1}^{G} \frac{B_g - P}{B_j - P} \right)^{\!-1} \!\! P_{mu}$ where $B_{j} = \frac{1}{\Upsilon_{j}} + \sum_{t=1}^{K_{j}} \frac{1}{\eta_{jt}} + K_{j} P$.
\end{theorem}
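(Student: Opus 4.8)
The plan is to mirror the proof of Theorem~\ref{multicastOpt} (Appendix~A), since $\mathcal{P}1$ has an identical structure under ZF; the only change is that the ZF SINR in \eqref{SINR.ZF.mu} carries $\xi_{jk}$ in its denominator (and a beamforming gain $N-G-U$ instead of $N$), and it is exactly this $\xi_{jk}$-dependence that will generate the extra $-PG$ term in \eqref{O1ZF}. \emph{Step 1 (eliminate $\tau$).} First I would substitute $x_{jk}=\tau q_{jk}^{up}$, so that $\xi_{jk}=x_{jk}\eta_{jk}^2/(1+\sum_t x_{jt}\eta_{jt})$ and hence the SINR in \eqref{SINR.ZF.mu} depends on the uplink variables only through the products $x_{jk}$, while the energy constraint becomes simply $0\le x_{jk}\le E_{jk}$, free of $\tau$. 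Since the SINR is independent of $\tau$ once the $x_{jk}$ are fixed while the prefactor $(1-\tau/T)$ in \eqref{SE.ZF.mu} is strictly decreasing in $\tau$, the smallest admissible value $\tau^*=U+G$ is optimal; and because $\log_2(1+\cdot)$ is increasing, $\mathcal{P}1$ reduces to maximizing $\min_{j,k}\mathrm{SINR}_{jk,mu}^{\mathrm{ZF}}$ over $\{x_{jk}\}$ and $\{q_j^{dl}\}$ subject to the box constraints and $\sum_j q_j^{dl}\le P_{mu}$.

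\emph{Step 2 (decouple the groups).} The key observation is that the SINR of UT $(j,k)$ depends on the downlink powers only through $q_j^{dl}$ and on the uplink powers only through $\{x_{jt}\}_{t\in\mathcal{K}_j}$, i.e. only through its own group's variables, so distinct groups are coupled solely through the sum-power budget. Writing \eqref{SINR.ZF.mu} as $(N-G-U)q_j^{dl}\,g_{jk}(\xi_{jk})$ with $g_{jk}(\xi)=\xi/(1+(\eta_{jk}-\xi)P)$, I would first solve, for each group in isolation, the inner problem $c_j=\max_{0\le x_{jt}\le E_{jt}}\min_{k}g_{jk}(\xi_{jk})$, and only then allocate the $q_j^{dl}$ across groups.

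\emph{Step 3 (inner per-group problem — the main obstacle).} This is where the real work lies, since raising any $x_{jk}$ increases $\xi_{jk}$ but lowers every other $\xi_{jt}$ through the shared denominator $S_j=1+\sum_t x_{jt}\eta_{jt}$. I would first establish by an exchange argument that at the inner optimum all $g_{jk}(\xi_{jk})$ equal a common value $v$: if some non-minimizing UT had slack, decreasing its $x_{jk}$ shrinks $S_j$ and thereby strictly raises the $g$-values of all other UTs, contradicting optimality (the optimum clearly has $v>0$, hence all $x_{jk}>0$). Inverting $g_{jk}(\xi_{jk})=v$ gives $\xi_{jk}=v(1+\eta_{jk}P)/(1+vP)$, whence $x_{jk}=\frac{1+\eta_{jk}P}{\eta_{jk}^2}\cdot\frac{v}{1-v(D_j-P)}$ with $D_j=\sum_t\frac{1}{\eta_{jt}}+K_jP$; each $x_{jk}$ is increasing in $v$ on the relevant range, so the maximal feasible $v$ is fixed by the first box constraint to bind, namely the UT attaining $\Upsilon_j=\min_k E_{jk}\eta_{jk}^2/(1+\eta_{jk}P)$. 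At that point the common factor $\frac{v}{1-v(D_j-P)}$ equals $\Upsilon_j$, giving $x_{jk}^*=\frac{1+\eta_{jk}P}{\eta_{jk}^2}\Upsilon_j$ (whence $q_{jk}^{up*}$ after dividing by $\tau^*=U+G$, and $x_{jk}^*\le E_{jk}$ since $\Upsilon_j\le E_{jk}\eta_{jk}^2/(1+\eta_{jk}P)$), and after substitution $c_j=1/(B_j-P)$ with $B_j=\frac{1}{\Upsilon_j}+\sum_t\frac{1}{\eta_{jt}}+K_jP$; equivalently every UT in group $j$ then attains the equal SINR $(N-G-U)q_j^{dl}/(B_j-P)$.

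\emph{Step 4 (outer allocation and assembly).} The remaining problem $\max_{\{q_j^{dl}\}}\min_j (N-G-U)q_j^{dl}/(B_j-P)$ subject to $\sum_j q_j^{dl}\le P_{mu}$ is a standard max--min power allocation: a second exchange argument forces all group SINRs equal to a common $\Gamma$ and the budget to be spent in full, giving $q_j^{dl*}=\Gamma(B_j-P)/(N-G-U)$ and, from $\sum_j q_j^{dl*}=P_{mu}$, $\Gamma=\frac{(N-G-U)P_{mu}}{\sum_g(B_g-P)}$, which rearranges to $q_j^{dl*}=(\sum_g\frac{B_g-P}{B_j-P})^{-1}P_{mu}$. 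Finally, using $\sum_g(B_g-P)=\sum_j\frac{1}{\Upsilon_j}+\sum_j\sum_t\frac{1}{\eta_{jt}}+P\sum_j K_j-PG$ and inserting $\Gamma$ into the SE with $\tau=U+G$ yields \eqref{O1ZF}, while the fact that every UT attains $\Gamma$ gives the claimed common-SE property $O_{mu}^*(P_{un})=\mathrm{SE}_{jk}$ for all $j,k$. I expect Step~3 to be the crux: correctly handling the coupling through $S_j$ together with the box constraints $x_{jk}\le E_{jk}$ is precisely where the $\xi_{jk}$-dependence of the ZF denominator departs from the MRT analysis and produces the $-PG$ correction.
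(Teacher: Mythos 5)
Your proof is correct and follows essentially the same route as the paper's: the paper simply states that the ZF case ``follows a similar procedure'' to Appendix~A, and your argument is that procedure carried out in detail, correctly producing the $-PG$ term in \eqref{O1ZF} from the $\xi_{jk}$-dependence of the ZF denominator via $v=1/(B_j-P)$. The only fix needed is one of ordering: establish the full-budget property $\sum_j q_j^{dl}=P-P_{un}$ (by the same $\theta$-scaling as in Appendix~A, which still works here since the denominator $1+(\eta_{jk}-\xi_{jk})(P_{un}+\theta P_{mu})$ grows sublinearly in $\theta$) \emph{before} Steps~2--3, because your definitions of $g_{jk}$ and $\Upsilon_j$ already use $P$ rather than $P_{un}+P_{mu}$ in the denominator, whereas you only justify saturation of the budget in Step~4.
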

\begin{proof}
	The proof follows similar procedure as Theorem \ref{multicastOpt}, and is omitted for brevity.
\end{proof}

Note that for multicast transmission when we switch from MRT (Theorem \ref{multicastOpt}) to ZF (Theorem \ref{multicastOptZF}), the SINR terms change in a particular way. The coherent beamforming gain reduces from $N$ to $N-(G + U)$. Also the interference in the denominator reduces by $PG$. This is due to the fact that ZF uses $G+U$ degrees of freedom to cancel the interference toward other UTs at the cost of reducing its coherent beamforming gain.

\begin{corollary}
	\label{Corollarymu}
	At the optimal solution of $\mathcal{P}1$, for both MRT and ZF precoding schemes, $P_{mu} + P_{un} = P $ and as we increase $P_{mu}$ (by reducing $P_{un}$) the optimal objective value of $\mathcal{P}1$ increases.
\end{corollary}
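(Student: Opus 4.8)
The plan is to reduce the statement to the closed-form optimal values already furnished by Theorems \ref{multicastOpt} and \ref{multicastOptZF}, and to treat the two assertions — tightness of the power budget and monotonicity of the optimum — in that order. For the tightness I would argue by contradiction via a simple scaling, and for the monotonicity I would inspect \eqref{O1} and \eqref{O1ZF} directly.

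First I would show that the budget constraint (\ref{MMF}-i) is active at every optimum, i.e. $P_{mu} = P - P_{un}$. Take any feasible point with $\sum_{j} q_{j}^{dl} = P_{mu} < P - P_{un}$ and multiply every downlink multicast power by $c = (P-P_{un})/P_{mu} > 1$, leaving the uplink powers and $\tau$ untouched so that $\xi_{jk}$ and $\eta_{jk}$ are unchanged; the resulting point is still feasible since $\sum_{j} c\, q_{j}^{dl} = P - P_{un}$ and the per-UT energy limits are not affected. For MRT, $\mathrm{SINR}_{jk,mu}^{\mathrm{MRT}}$ in \eqref{SINR.MRT.mu} then has its numerator scaled by $c$ while $P_{mu}$ becomes $c P_{mu}$ in the denominator, and a one-line cross-multiplication reduces the desired inequality to $c(1+\eta_{jk}P_{un}) > 1 + \eta_{jk}P_{un}$, which holds because $c>1$. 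The identical computation with $\eta_{jk}$ replaced by the strictly positive quantity $\eta_{jk}-\xi_{jk}$ settles the ZF case through \eqref{SINR.ZF.mu}. Hence scaling up strictly raises every multicast SE, and therefore the minimum, so a slack solution can never be optimal and the budget is spent in full, giving $P_{mu}+P_{un}=P$.

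Next I would establish the monotonicity from the optimal values \eqref{O1} and \eqref{O1ZF}. The essential observation is that, although each raw per-user SINR in \eqref{SINR.MRT.mu} and \eqref{SINR.ZF.mu} carries the interference penalty $(P_{un}+P_{mu})$ in its denominator, after substituting the optimal uplink powers and $\tau^{*}=U+G$ the aggregate expression collapses to a denominator that depends only on the total power $P$ and on the fixed channel and group parameters $\Upsilon_{j},\eta_{jt},K_{j}$ — never on the split $(P_{un},P_{mu})$. Thus $\Gamma$ in \eqref{O1} is $N P_{mu}$ divided by a quantity constant in $P_{mu}$, and its ZF analogue is $(N-G-U)P_{mu}$ divided by $\sum_{j}\frac{1}{\Upsilon_{j}}+\sum_{j,t}\frac{1}{\eta_{jt}}+P\sum_{j}K_{j}-PG$, which is strictly positive because $\sum_{j}K_{j}\ge G$.

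Finally, writing $P_{mu}=P-P_{un}$, both arguments of the logarithm are strictly increasing affine functions of $P_{mu}$ (equivalently strictly decreasing in $P_{un}$), and the prefactor $1-(U\!+\!G)/T$ is a positive constant whenever $U+G<T$; since $\log_{2}(1+\cdot)$ is increasing, $O_{mu}^{*}(P_{un})$ strictly increases as $P_{mu}$ grows, which is the claim. I expect the only real obstacle to be conceptual rather than computational: one must recognize that the interference term $\eta_{jk}(P_{un}+P_{mu})$ appearing in the raw SINR does not in fact offset the gain from extra multicast power at the optimum, because at the optimal operating point it is governed by the fixed total budget $P$ and not by how that budget is divided between unicast and multicast.
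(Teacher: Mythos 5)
Your proof is correct and follows essentially the same route as the paper: the paper's own argument simply invokes Theorems \ref{multicastOpt} and \ref{multicastOptZF} for the tightness $P_{mu}+P_{un}=P$ (your scaling contradiction is exactly the argument embedded in Appendix A) and then notes that the SINR in \eqref{O1} and \eqref{O1ZF} is linear in $P_{mu}$ with a denominator depending only on the total budget $P$, which is precisely your monotonicity step. Your added checks (positivity of the ZF denominator via $\sum_j K_j \ge G$, and that the uplink energy constraints are unaffected by the downlink scaling) are correct refinements of the same argument rather than a different approach.
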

\begin{proof}
	Based on Theorems \ref{multicastOpt} and \ref{multicastOptZF}, $P_{mu} + P_{nu} = P $. Also from \eqref{O1} and \eqref{O1ZF}, it is obvious that the SINR is linearly increasing with $P_{mu}$. Therefore $O_{mu}^{*}(P_{un})$ is monotonically increasing with $P_{mu}$, which completes the proof.
\end{proof}

For unicast UTs we can optimize the resources considering different metrics. Here we consider the weighted SSE, which is the sum of the weighted SEs of the unicast UTs, as it is a common metric of interest in unicast transmission \cite{SumRateMaxPowerAllTWC,LowComSumRateMassiveMIMO,DualitySumrateGaussianMIMO,SumRateTWC,SumPower,SumRateMaxTSP}. Given our proposed single cell joint unicast and multi-group multicast system, the SSE problem for unicast UTs is 
\begin{align}
\label{sumrate}
\mathcal{P}2: \underset{ \{ p_{m}^{dl} \}, \{ p_{m}^{u} \}, \tau}{\text{maximize}}  & \;\;  \sum_{m=1}^{U} \alpha_{m} \mathrm{SE}_{m,un}^{\dagger} 
\\
s.t. \;\; & \sum_{m=1}^{U} p_{m}^{dl} \leq P - P_{mu} \tag{\ref{sumrate}-i}
\\
& 0 \leq p_{m}^{dl}   \tag{\ref{sumrate}-ii}
\\
& 0 \leq \tau p_{m}^{up} \leq E_{m}      \tag{\ref{sumrate}-iii}
\\
& \tau \in \{U\!\!+\!G,\ldots,T \}        \tag{\ref{sumrate}-iv}
\end{align}
where $\dagger$ is either $\mathrm{MRT}$ or $\mathrm{ZF}$, $P_{mu}$ is a given fixed quantity, $E_m$ is maximum energy limit of the $m$th unicast UT per pilot transmission, and $\alpha_{m}$ is the weight of the SE of this user. Hereafter we denote an objective value of $\mathcal{P}2$ that is obtained for a set of feasible decision variables of $\mathcal{P}2$ by $O_{un}$, and we denote its optimal objective value by $O_{un}^{*}$. Now we have the following Theorem.

\begin{theorem}
	\label{unicastOpt}
	For a fixed value of $P_{mu}$ and $0 \leq P_{mu} \leq P$ and MRT precoding, the optimal solution to problem $\mathcal{P}2$ is $\tau^{*} =U+G$, $p_{m}^{up*} = \frac{E_{m}}{U+G}$, and 
	\begin{align}
	\label{P3Sol}
	p_{m}^{dl*} = \max \left\lbrace  0,\dfrac{\alpha_{m}}{\nu \ln 2 } - \dfrac{1+\beta_{m}P}{N \vartheta_{m}^{*}} \right\rbrace
	\end{align}
	where $\vartheta_{m}^{*}= \frac{E_{m} \beta_{m}^{2}}{1+E_{m} \beta_{m}}$ and $\nu$ is selected to satisfy $P_{un} = P - P_{mu}$. The optimal objective value of $\mathcal{P}2$ becomes
	\begin{align}
	\label{O2}
		O_{un}^{*}(P_{mu}) =  \left(1 - \frac{U+G}{T}\right)  \sum_{m=1}^{U} \alpha_{m} \log_{2} (1 + \dfrac{N p_{m}^{dl*} \vartheta_{m}^{*} }{1  + \beta_{m} P} ).
	\end{align}
\end{theorem}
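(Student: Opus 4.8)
The plan is to exploit the structure of $\mathrm{SINR}_{m,un}^{\mathrm{MRT}}$ in \eqref{SINR.MRT.un} and peel off the three groups of decision variables one at a time, in the order: uplink pilot powers, pilot length $\tau$, and finally the downlink powers. First I would observe that the pilot powers $\{p_m^{up}\}$ and $\tau$ enter the SINR only through $\vartheta_m=\tau p_m^{up}\beta_m^2/(1+\tau p_m^{up}\beta_m)$ in the numerator, and that $\vartheta_m$ is strictly increasing in the product $\tau p_m^{up}$ (its derivative with respect to $\tau p_m^{up}$ is $\beta_m^2/(1+\tau p_m^{up}\beta_m)^2>0$). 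Since increasing any $\vartheta_m$ increases $\mathrm{SINR}_{m,un}^{\mathrm{MRT}}$ and hence the objective, and since the product $\tau p_m^{up}$ is capped by the energy constraint $0\le \tau p_m^{up}\le E_m$ of $\mathcal{P}2$, the optimum forces $\tau p_m^{up}=E_m$ for every $m$, giving $\vartheta_m^{*}=E_m\beta_m^2/(1+E_m\beta_m)$. Crucially, this value no longer depends on $\tau$.

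With $\vartheta_m^{*}$ fixed, the SINR is independent of $\tau$, so the only residual $\tau$-dependence of the objective is through the prelog factor $(1-\tau/T)$, which is strictly decreasing in $\tau$. Hence the smallest admissible value is optimal, $\tau^{*}=U+G$, the smallest value allowed in $\mathcal{P}2$, and then $p_m^{up*}=E_m/\tau^{*}=E_m/(U+G)$. This settles the first two claims and decouples the pilot phase from the downlink power allocation.

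It remains to optimize $\{p_m^{dl}\}$ subject to the power and nonnegativity constraints of $\mathcal{P}2$. The remaining coupling is that the denominator of $\mathrm{SINR}_{m,un}^{\mathrm{MRT}}$ contains $P_{un}=\sum_m p_m^{dl}$, so a power increase simultaneously raises one numerator and every denominator. The key step is to show full power is used, i.e., $P_{un}=P-P_{mu}$. I would argue this by a scaling argument: for any feasible allocation with $\sum_m p_m^{dl}<P-P_{mu}$ that is not identically zero, scaling all $p_m^{dl}$ by the common factor $c=(P-P_{mu})/P_{un}>1$ preserves feasibility and strictly increases each SINR, since $\tfrac{d}{dc}\,\tfrac{cA}{cB+D}=\tfrac{AD}{(cB+D)^2}>0$ with $A=Np_m^{dl}\vartheta_m^{*}$, $B=\beta_m P_{un}$, and $D=1+\beta_m P_{mu}$. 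Therefore the optimum lies on the boundary $P_{un}=P-P_{mu}$, where $P_{un}+P_{mu}=P$ and every denominator collapses to the constant $1+\beta_m P$.

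On this boundary the problem reduces to the concave program of maximizing $\sum_m\alpha_m\log_2\!\big(1+Np_m^{dl}\vartheta_m^{*}/(1+\beta_m P)\big)$ over $\{\sum_m p_m^{dl}=P-P_{mu},\,p_m^{dl}\ge 0\}$, a weighted water-filling problem. I would form the Lagrangian with multiplier $\nu$ for the power constraint, impose the stationarity condition, and solve for $p_m^{dl}$, obtaining $p_m^{dl}=\alpha_m/(\nu\ln 2)-(1+\beta_m P)/(N\vartheta_m^{*})$; complementary slackness for the nonnegativity constraints then yields the $\max\{0,\cdot\}$ form of \eqref{P3Sol}, with $\nu$ fixed by $\sum_m p_m^{dl*}=P-P_{mu}$. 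Substituting $p_m^{dl*}$, $\tau^{*}$, and the frozen denominators into \eqref{SE.MRT.un} gives \eqref{O2}. The main obstacle is the full-power step of the third paragraph: because the shared term $P_{un}$ couples the numerators and all denominators, monotonicity in a single $p_m^{dl}$ is not immediate, and the uniform-scaling argument is what cleanly resolves it and frees the subsequent water-filling.
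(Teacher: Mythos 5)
Your proposal is correct and follows essentially the same route as the paper's Appendix B: maximize $\vartheta_m$ by saturating the pilot-energy constraint, take $\tau$ at its minimum $U+G$ to maximize the prelog, establish full downlink power $P_{un}=P-P_{mu}$ by a uniform scaling argument, and finish with weighted water-filling. The only difference is the order of the steps (you handle pilots and $\tau$ before the full-power argument, the paper does the reverse), which is immaterial since the two arguments are independent.
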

\begin{proof}
	The proof is given in Appendix B.
\end{proof}

\begin{theorem}
	\label{unicastOptZF}
	For a fixed value of $P_{mu}$ and $0 \leq P_{mu} \leq P$ and ZF precoding, the optimal solution to problem $\mathcal{P}2$ is $\tau^{*} =U+G$, $p_{m}^{up*} = \frac{E_{m}}{U+G}$, and 
	\begin{align}
	\label{P3SolZF}
	p_{m}^{dl*} = \max \left\lbrace  0,\dfrac{\alpha_{m}}{\nu \ln 2 } - \dfrac{1+(\beta_{m}-\vartheta_{m}^{*})P}{(N-G-U) \vartheta_{m}^{*}} \right\rbrace
	\end{align}
	where $\vartheta_{m}^{*}= \frac{E_{m} \beta_{m}^{2}}{1+E_{m} \beta_{m}}$ and $\nu$ is selected to satisfy $P_{un} = P - P_{mu}$. The optimal objective value of $\mathcal{P}2$ becomes
	\begin{align}
	\label{O2ZF}
	O_{un}^{*}(P_{mu}) =  \left(1 - \frac{U+G}{T}\right) \sum_{m=1}^{U} \alpha_{m} \log_{2} \left(1 + \dfrac{(N-G-U) p_{m}^{dl*} \vartheta_{m}^{*} }{1  + (\beta_{m} - \vartheta_{m}^{*}) P} \right). 
	\end{align}
\end{theorem}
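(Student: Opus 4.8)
The plan is to follow the same line of reasoning used for Theorem~\ref{unicastOpt} (Appendix~B), the only structural change being that in passing from MRT to ZF the beamforming gain $N$ in \eqref{SINR.MRT.un} is replaced by $N-G-U$ and the interference coefficient $\beta_m$ is replaced by $\beta_m-\vartheta_m$ in \eqref{SINR.ZF.un}. I would organize the argument into four steps: first show that every uplink pilot energy is used in full; second, deduce the optimal pilot length $\tau^{*}$; third, show that the whole downlink budget $P-P_{mu}$ is consumed; and fourth, solve the remaining concave allocation in closed form.

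For the first two steps I would fix $\tau$ and all downlink powers and regard $\vartheta_m$ as the only adjustable quantity for user~$m$. A direct differentiation of \eqref{SINR.ZF.un} shows that $\mathrm{SINR}_{m,un}^{\mathrm{ZF}}$ is strictly increasing in $\vartheta_m$, since the numerator grows linearly in $\vartheta_m$ while the interference term $(\beta_m-\vartheta_m)(P_{un}+P_{mu})$ shrinks. Because $\vartheta_m=\tau p_m^{up}\beta_m^{2}/(1+\tau p_m^{up}\beta_m)$ is increasing in the product $\tau p_m^{up}$, and because this product enters \emph{only} user~$m$'s own SINR (the coupling quantities $P_{un},P_{mu}$ are downlink powers and do not depend on the estimation quality), each pilot can be optimized independently by saturating constraint~(\ref{sumrate}-iii), i.e. $\tau p_m^{up}=E_m$. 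This gives $\vartheta_m^{*}=E_m\beta_m^{2}/(1+E_m\beta_m)$, which is crucially independent of $\tau$. With $\vartheta_m^{*}$ fixed, $\tau$ enters the objective only through the common prelog factor $1-\tau/T$, so the objective is decreasing in $\tau$ and the smallest feasible value $\tau^{*}=U+G$ is optimal, forcing $p_m^{up*}=E_m/(U+G)$.

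The third step is the one I expect to be the main obstacle, because the interference depends on the \emph{total} unicast power $P_{un}=\sum_u p_u^{dl}$, so the users are coupled and the objective is neither separable nor jointly concave a priori. I would resolve this with a scaling argument: given a feasible allocation with $P_{un}<P-P_{mu}$, multiplying every $p_m^{dl}$ by a common factor $c>1$ (still feasible for $c$ close to $1$) scales each SINR by $c\,[1+(\beta_m-\vartheta_m^{*})(P_{un}+P_{mu})]/[1+(\beta_m-\vartheta_m^{*})(cP_{un}+P_{mu})]$, whose numerator minus denominator equals $(c-1)(1+(\beta_m-\vartheta_m^{*})P_{mu})>0$ since $\beta_m-\vartheta_m^{*}>0$; hence every SINR strictly increases. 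Thus the budget is used in full, $P_{un}+P_{mu}=P$, and every denominator freezes to the constant $1+(\beta_m-\vartheta_m^{*})P$. The problem then decouples into the concave water-filling of maximizing $\sum_m\alpha_m\log_2(1+\mathrm{SINR}_{m,un}^{\mathrm{ZF}})$ subject to $\sum_m p_m^{dl}=P-P_{mu}$ and $p_m^{dl}\ge 0$, with each $\mathrm{SINR}_{m,un}^{\mathrm{ZF}}=(N-G-U)\vartheta_m^{*}p_m^{dl}/(1+(\beta_m-\vartheta_m^{*})P)$ now linear in $p_m^{dl}$. Writing the KKT stationarity condition with a single multiplier $\nu$ for the sum-power constraint and using complementary slackness for the nonnegativity constraints yields the water-filling expression \eqref{P3SolZF} (the prelog factor $1-\tau/T$ being absorbed into $\nu$), with $\nu$ tuned so that $P_{un}=P-P_{mu}$; substituting $\tau^{*}$, $\vartheta_m^{*}$, and $p_m^{dl*}$ into \eqref{SE.ZF.un} then gives the claimed optimal value \eqref{O2ZF}.
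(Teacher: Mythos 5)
Your proposal is correct and follows essentially the same route as the paper: the paper proves Theorem~\ref{unicastOptZF} by pointing to Appendix~B (the MRT case) and noting the structural similarity of \eqref{SINR.MRT.un} and \eqref{SINR.ZF.un}, and your four steps — saturating the pilot energy so $\vartheta_m^{*}=E_m\beta_m^{2}/(1+E_m\beta_m)$, taking $\tau^{*}=U+G$ via the prelog, the common-scaling argument showing $P_{un}+P_{mu}=P$ (the analogue of the paper's $\rho$-scaling), and the final water-filling — are exactly the ingredients of that proof, merely carried out with $N-G-U$ and $\beta_m-\vartheta_m^{*}$ in place of $N$ and $\beta_m$. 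The only cosmetic differences are the order of the steps and that the paper formally relaxes $\tau$ to a real variable before arguing monotonicity, which your direct argument over the discrete set handles equally well.
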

\begin{proof}
	Due to similar structure of \eqref{SINR.MRT.un} and \eqref{SINR.ZF.un}, the proof follows similar argument as the proof of Theorem \ref{unicastOpt} and is omitted for brevity.
\end{proof}

Theorems \ref{multicastOpt} to \ref{unicastOptZF} determine how the SE of multicast and unicast UTs are related to the system parameters. For example, they prove that as $N$ increases the SEs of both unicast and multicast UTs improve, which is due to the improved coherent transmission of signals and it is obtained by employing a large-scale antenna array. Moreover, for unicast transmission we have a similar behavior as the one we had for multicast transmission. As we switch from MRT (Theorem \ref{unicastOpt}) to ZF (Theorem \ref{unicastOptZF}), the coherent beamforming gain in the SINR expression of unicast UTs reduces from $N$ to $N-(G + U)$, and the interference in the denominator of the SINR expression reduces by $\vartheta_{m}^{*} P$.

\begin{remark}
	Note that the optimal decision variables and the optimal objective values of Theorems 1 to 4, are based solely on large-scale fading parameters ($\beta_u$ and $\eta_{jk}$), and the basic system parameters, e.g., $G$, $K$, etc. The system parameters are known, and the large-scale fading can be easily estimated \cite{liu2017large}. Moreover, as the large-scale fading parameters are changing much slower than small-scale fading (100-1000 times \cite{Emil10Myth}), once one estimates the scalar large-scale fading coefficients, they are valid for a large number of symbol transmissions and the estimation overhead is negligible.
\end{remark}

\begin{corollary}
	\label{Corollaryun}
	At the optimal solution of $\mathcal{P}2$, for both MRT and ZF precoding schemes, $P_{mu} + P_{un} = P $ and as we increase $P_{un}$ (by reducing $P_{mu}$) the optimal objective value of $\mathcal{P}2$ increases.
\end{corollary}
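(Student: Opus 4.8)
The plan is to mirror the short argument used for Corollary~\ref{Corollarymu}, now invoking Theorems~\ref{unicastOpt} and~\ref{unicastOptZF} instead of Theorems~\ref{multicastOpt} and~\ref{multicastOptZF}. First I would dispatch the power-balance claim $P_{un}+P_{mu}=P$. From \eqref{SINR.MRT.un} (and \eqref{SINR.ZF.un}) the SINR of each unicast UT is strictly increasing in its own downlink power $p_m^{dl}$, while the interference term in the denominator depends only on the fixed total power $P=P_{un}+P_{mu}$ rather than on how that power is split. Hence the weighted objective $\sum_{m}\alpha_m \mathrm{SE}_{m,un}^{\dagger}$ is strictly increasing in each $p_m^{dl}$, so at any optimum the budget constraint $\sum_{m=1}^{U} p_m^{dl}\le P-P_{mu}$ must hold with equality; otherwise one could raise some $p_m^{dl}$ and strictly improve the objective. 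This yields $P_{un}=\sum_{m} p_m^{dl*}=P-P_{mu}$, which is exactly the condition that fixes $\nu$ in Theorems~\ref{unicastOpt} and~\ref{unicastOptZF}.

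For the monotonicity claim, the key observation I would stress is that $P_{mu}$ enters the optimal objective \eqref{O2} (resp.\ \eqref{O2ZF}) only through the unicast power budget $P-P_{mu}$. Indeed, the SINR denominators $1+\beta_m P$ (resp.\ $1+(\beta_m-\vartheta_m^*)P$) involve solely the fixed total power $P$, so the cross-service interference from multicast transmission is already absorbed into a constant and the objective itself carries no residual dependence on the split. I would therefore regard $\mathcal{P}2$ as the maximization of a $P_{mu}$-independent objective over the feasible set $\{p_m^{dl}\ge 0,\ \sum_m p_m^{dl}\le P-P_{mu}\}$. Lowering $P_{mu}$ (equivalently raising $P_{un}=P-P_{mu}$) enlarges this set, so any point feasible for the larger $P_{mu}$ stays feasible for the smaller one; hence $O_{un}^*(P_{mu})$ is nonincreasing in $P_{mu}$, i.e., nondecreasing in $P_{un}$.

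To sharpen this to a strict increase I would appeal to the water-filling form \eqref{P3Sol} (resp.\ \eqref{P3SolZF}): since the budget is fully used, enlarging it raises the water level and strictly increases the power allotted to the always-active users, and because each $\log_2(\cdot)$ summand is strictly increasing in $p_m^{dl}$ with at least one positive weight $\alpha_m$, the optimal value strictly grows. The only real obstacle is conceptual rather than computational, namely noticing that the inter-service interference has collapsed into the fixed quantity $P$ so that $P_{mu}$ acts on the unicast problem purely as a budget parameter; once this is recognized, both assertions reduce to the elementary fact that maximizing a fixed objective over a monotonically enlarging feasible set produces a monotone optimal value.
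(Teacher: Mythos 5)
Your monotonicity argument is sound and takes a slightly different route from the paper's: you compare optima via feasible-set inclusion (lowering $P_{mu}$ enlarges $\{\,p_m^{dl}\ge 0,\ \sum_m p_m^{dl}\le P-P_{mu}\}$) and then invoke the water-filling structure of \eqref{P3Sol}/\eqref{P3SolZF} to upgrade "nondecreasing" to "strictly increasing." The paper instead perturbs the old optimizer directly: it adds $\delta/U$ to every $p_m^{dl*}$, observes that every SINR strictly increases because the interference term still evaluates at total power $P$, and concludes the new optimum exceeds the old one. Both work; the paper's perturbation is more elementary and sidesteps any discussion of which users are "active," while yours makes the budget-parameter dependence explicit.

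There is, however, a genuine flaw in your justification of $P_{un}+P_{mu}=P$. You assert that the weighted objective $\sum_m\alpha_m\mathrm{SE}_{m,un}^{\dagger}$ is strictly increasing in each individual $p_m^{dl}$ because "the interference term in the denominator depends only on the fixed total power $P$." But in \eqref{SINR.MRT.un} the quantity $P_{un}=\sum_u p_u^{dl}$ is the power actually \emph{used}, not the budget; the denominator equals $1+\beta_m P$ only after you have already shown the constraint binds, so reading it that way is circular. Read literally, raising one user's $p_m^{dl}$ raises $P_{un}$ and hence every other user's denominator, so $\mathrm{SINR}_{u,un}$ for $u\neq m$ strictly \emph{decreases}; with unequal weights (e.g., $\alpha_m$ small and $\alpha_u$ large) the weighted sum can strictly decrease, so "raise some $p_m^{dl}$ and improve" fails. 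The repair is the one used in Appendix~B for Theorem~\ref{unicastOpt}: scale \emph{all} downlink powers by a common factor $\rho>1$, which turns each SINR into $N p_m^{dl}\vartheta_m/\bigl(\tfrac{1+\beta_m P_{mu}}{\rho}+\beta_m P_{un}\bigr)$ and therefore increases every term simultaneously. Alternatively, do what the paper's proof of the corollary does and simply cite Theorems~\ref{unicastOpt} and~\ref{unicastOptZF} for the power balance, since their proofs already establish it.
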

\begin{proof}
	Based on Theorems \ref{unicastOpt} and \ref{unicastOptZF}, $P_{mu} + P_{un} = P $. Assume for the given $P_{mu}$, the optimal objective is denoted as $O_{un,old}^{*}(P_{mu})$ and is obtained for $\{p_{m}^{dl*}\}$. Now if we increase $P_{un}$ to $P_{un}+\delta$ while reducing $P_{mu}$ to $P_{mu} - \delta$, we can divide this extra power equally among all $U$ unicast UTs, i.e. $p_{m}^{dl*} \to p_{m}^{dl*} + \frac{\delta}{U}$, which gives us an objective value $O_{un,new}$ such that $O_{un,new} > O_{un,old}^{*}(P_{mu})$. Denote the new optimal objective value as $O_{un}^{*}(P_{mu}-\delta)$ (the optimal objective value when the precoding power used for unicast transmission is equal to $P_{un} + \delta$). Then, $O_{un}^{*}(P_{mu}-\delta)$ must be bigger than or equal to $O_{un,new}$. Thus the optimal objective value has increased.
\end{proof}

\section{The Pareto Boundary of Joint Unicast and Multi-group Multicast Massive MIMO Systems}
Based on Corollaries \ref{Corollarymu} and \ref{Corollaryun}, it is obvious that $O_{mu}^{*}(P_{un})$ and $O_{un}^{*}(P_{mu})$ are coupled in a conflicting manner, i.e., $P=P_{un}+P_{mu}$, such that improving $O_{mu}^{*}(P_{un})$ would degrade $O_{un}^{*}(P_{mu})$, and vice versa. Therefore we need to consider a multi-objective optimization framework to understand the jointly achievable operating points. This description enables us to balance the two conflicting objectives and obtain Pareto optimal solutions. To this end, we state a MOOP and derive its Pareto boundary \cite{zadeh1963optimality,bjornson2014multiobjective,marler2004survey}. The MOOP is \cite{marler2004survey}
\begin{align}
\label{MOOP}
\mathcal{M}: \max_{\mathbf{x}} &\quad [O_{mu}(\mathbf{x}), O_{un}(\mathbf{x})]^{T}
\\
s.t. &\quad    \mathbf{x} \in  \mathcal{X}    \tag{\ref{MOOP}-i}
\end{align}
where $\mathbf{x} =  ( \{ q_{j}^{dl} \}, \{ q_{jk}^{up} \},  \{ p_{m}^{dl} \}, \{ p_{m}^{up} \}, \tau )$ and $\mathcal{X}$ is the so called resource bundle \cite{bjornson2014multiobjective}, which is achieved by bundling the feasible space of $\mathcal{P}1$ and $\mathcal{P}2$ as
\begin{align}
\begin{split}
	\mathcal{X} = \bigg\{ \left( \{ q_{j}^{dl} \}, \{ q_{jk}^{up} \},  \{ p_{m}^{dl} \}, \{ p_{m}^{up} \}, \tau \right)  \; | \;\;&  0 \leq q_{j}^{dl}  , 0 \leq \tau q_{jk}^{up} \leq E_{jk},  0 \leq p_{m}^{dl},\; 0 \leq \tau p_{m}^{up} \leq E_{m},...
	\\
	&  P_{un} + P_{mu} \leq P, \tau \in \{ U+G,\ldots,T\}  \bigg\}. 
\end{split}
\end{align}
\begin{remark}
		Note the difference between $O_{mu}(\mathbf{x})$ and $O_{mu}^{*}(P_{un})$. $O_{mu}(\mathbf{x})$ is the objective achieved for $\mathcal{P}1$ given $\mathbf{x} \in \mathcal{X}$, while $O_{mu}^{*}(P_{un})$ is the optimal objective achieved for $\mathcal{P}1$ based on Theorem \ref{multicastOpt} or Theorem \ref{multicastOptZF}, given the unicast power is fixed to $P_{un}$. Similarly, $O_{un}(\mathbf{x})$ is the objective achieved for $\mathcal{P}2$ given $\mathbf{x} \in \mathcal{X}$, while $O_{un}^{*}(P_{mu})$ is the optimal objective achieved for $\mathcal{P}2$ based on Theorem \ref{unicastOpt} or Theorem \ref{unicastOptZF}, given the multicast power is fixed to $P_{mu}$.
\end{remark}

The so-called \textit{attainable objective set} of the MOOP $\mathcal{M}$, given in \eqref{MOOP}, is \cite{bjornson2014multiobjective}
\begin{align}
\mathcal{S} = \{(O_{mu}(\mathbf{x}), O_{un}(\mathbf{x})) | \mathbf{x} \in \mathcal{X} \}.
\end{align}
Now we are ready to define the Pareto boundary and Pareto optimality for $\mathcal{M}$. The strong Pareto boundary of $\mathcal{M}$ is a set $\mathcal{B}_{s}$ containing all the tuples $ (O_{mu}(\mathbf{x}^{*}), O_{un}(\mathbf{x}^{*}))$ such that $\mathbf{x}^{*} \in \mathcal{X}$ and $\nexists \mathbf{y} \in \mathcal{X}$ for which either $ O_{mu}(\mathbf{x}^{*}) < O_{mu}(\mathbf{y})$ and $O_{un}(\mathbf{x}^{*})\leq O_{un}(\mathbf{y})$, or $ O_{mu}(\mathbf{x}^{*}) \leq O_{mu}(\mathbf{y})$ and $O_{un}(\mathbf{x}^{*}) <  O_{un}(\mathbf{y})$. In this case, $\mathbf{x}^{*}$ is called a Pareto optimal point \cite{marler2004survey}. Moreover, a point $\mathbf{z}^{*} \in \mathcal{X}$ is called a weak Pareto optimal point if $\nexists \mathbf{y} \in \mathcal{X}$ such that $O_{mu}(\mathbf{z}^{*}) < O_{mu}(\mathbf{y}) , O_{un}(\mathbf{z}^{*}) < O_{un}(\mathbf{y})$ \cite{marler2004survey}. Note that every strong Pareto optimal point is also a weak Pareto optimal point, but the converse is not true. The set $\mathcal{B}_{w}$ that contains all the tuples $(O_{mu}(\mathbf{z}^{*}), O_{un}(\mathbf{z}^{*}))$ where $\mathbf{z}^{*}$ is a weak Pareto optimal point is called the weak Pareto boundary \cite{bjornson2014multiobjective,marler2004survey}. The strong and weak Pareto boundaries are schematically presented in Fig. \ref{ParetoFig}. We have the following theorem for the Pareto boundary and Pareto optimal points of $\mathcal{M}$.

\begin{figure}[h]
	\centering
	\includegraphics[width=0.6\columnwidth, trim={1.5cm 0cm 0cm 1.5cm},clip]{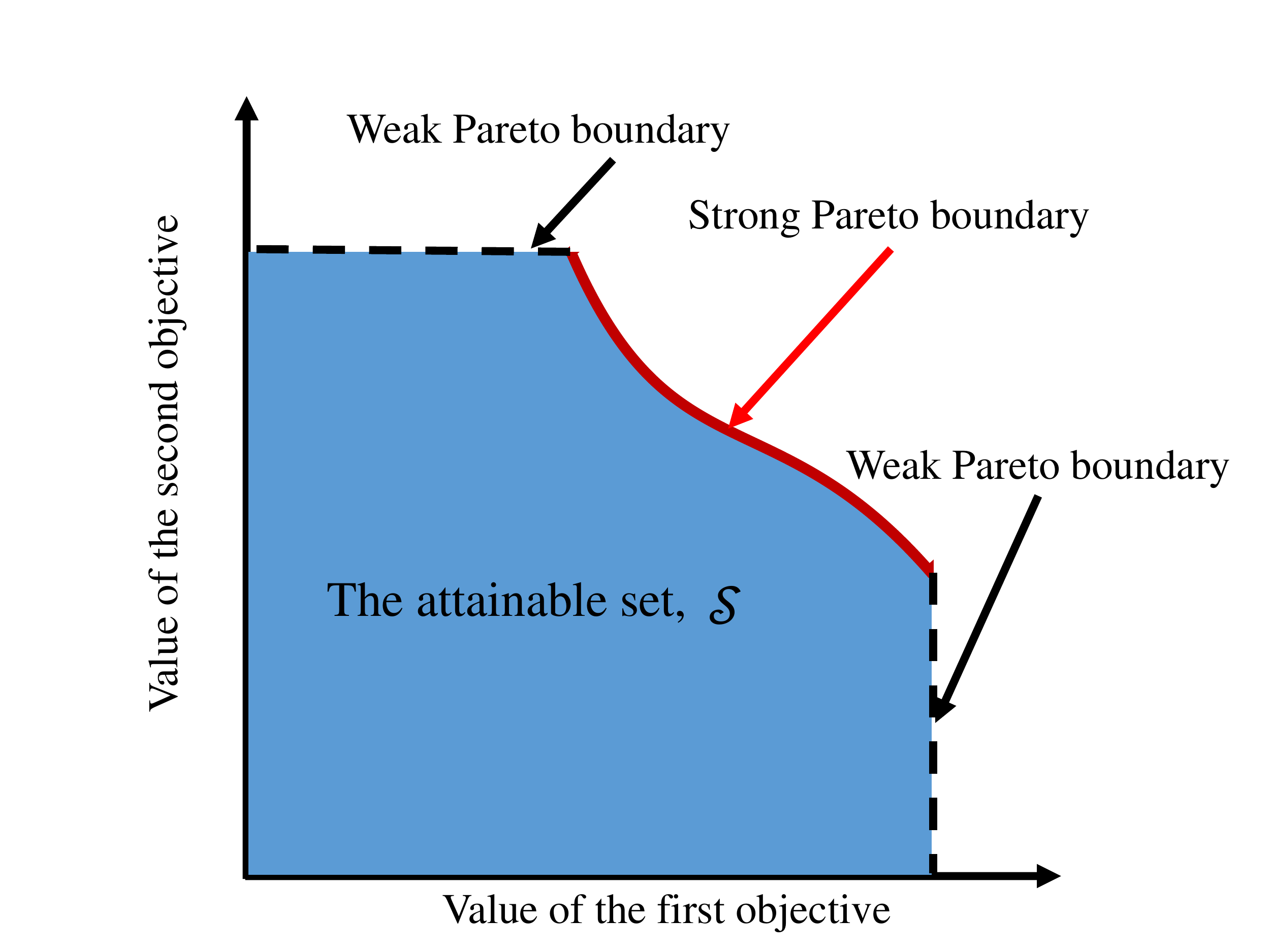}
	\caption{An example of weak and strong Pareto boundaries, and the attainable set \cite{bjornson2014multiobjective}.}
	\label{ParetoFig}
\end{figure}

\begin{theorem}
	\label{Theo_ParetoBound}
	The MOOP \eqref{MOOP} of the considered joint unicast and multi-group multicast massive MIMO system with either MRT or ZF precoding, does not have any weak Pareto optimal points and its strong Pareto boundary is analytically described by
	\begin{align}
		\mathcal{B}_{s} =  \bigg\{ \left(O_{mu}^{*}(P_{un}), O_{un}^{*}(P_{mu})\right) \; | \;  P_{mu} + P_{un} = P, 
		0 \leq P_{mu} \leq P ,  0 \leq P_{un} \leq P \bigg\}.
	\end{align}
	Moreover, $(O_{mu}^{*}(P_{un}), O_{un}^{*}(P_{mu})) \in \mathcal{B}_{s}$ is achieved when $( \{ q_{j}^{dl*} \}, \{ q_{jk}^{up*} \},  \{ p_{m}^{dl*} \}, \{ p_{m}^{up*} \}, \tau^{*} )$ are obtained either from Theorems \ref{multicastOpt} and \ref{unicastOpt} for MRT precoding, or from Theorems \ref{multicastOptZF} and \ref{unicastOptZF} for ZF precoding.
\end{theorem}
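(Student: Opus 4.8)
The plan is to collapse the two–dimensional MOOP \eqref{MOOP} onto the one–parameter family indexed by the power split $P_{un}+P_{mu}=P$ and then identify that family with $\mathcal{B}_s$. I would work in four steps. First I would record two \emph{master inequalities}: for every $\mathbf{x}\in\mathcal{X}$, writing $P_{un}(\mathbf{x})=\sum_m p_m^{dl}$ and $P_{mu}(\mathbf{x})=\sum_j q_j^{dl}$, the multicast variables of $\mathbf{x}$ are feasible for $\mathcal{P}1$ with the unicast power fixed to $P_{un}(\mathbf{x})$ (since $P_{mu}(\mathbf{x})\le P-P_{un}(\mathbf{x})$ by \eqref{TotalPower}) and reproduce exactly $O_{mu}(\mathbf{x})$, whence $O_{mu}(\mathbf{x})\le O_{mu}^*(P_{un}(\mathbf{x}))$; symmetrically $O_{un}(\mathbf{x})\le O_{un}^*(P_{mu}(\mathbf{x}))$. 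Here it is essential that in \eqref{SINR.MRT.un}--\eqref{SINR.ZF.mu} the interference depends on $\mathbf{x}$ only through the total power $P_{un}(\mathbf{x})+P_{mu}(\mathbf{x})$, so substituting $\mathbf{x}$'s own variables into $\mathcal{P}1$ (resp.\ $\mathcal{P}2$) returns $\mathbf{x}$'s own objective, and the shared $\tau$ only helps the upper bound.

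Second, I would prove that full power is mandatory: any $\mathbf{x}$ with $\sigma:=P_{un}(\mathbf{x})+P_{mu}(\mathbf{x})<P$ is \emph{strictly} dominated. Keeping all uplink powers and $\tau$ fixed (so $\vartheta_m,\xi_{jk}$ are unchanged), I would scale the downlink powers $p_m^{dl}\mapsto c_m p_m^{dl}$ and $q_j^{dl}\mapsto d_j q_j^{dl}$ by factors just above the compensating values $\frac{1+\beta_m P}{1+\beta_m\sigma}$ and $\max_k\frac{1+\eta_{jk}P}{1+\eta_{jk}\sigma}$ (for ZF, $\frac{1+(\beta_m-\vartheta_m)P}{1+(\beta_m-\vartheta_m)\sigma}$ and $\max_k\frac{1+(\eta_{jk}-\xi_{jk})P}{1+(\eta_{jk}-\xi_{jk})\sigma}$). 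Since $\frac{\beta_m}{1+\beta_m\sigma}<\frac1\sigma$ and likewise for each $\eta_{jk}$, the compensating scaling already consumes total power $<P$, so strict slack remains and the factors can be pushed strictly higher while still satisfying \eqref{TotalPower}; this strictly increases every unicast and every multicast SINR, hence strictly raises both $O_{un}$ and $O_{mu}$. Consequently such an $\mathbf{x}$ is not even weakly Pareto optimal, and every $\mathbf{x}\in\mathcal{X}$ is weakly dominated by a full-power allocation.

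Third, on the full-power sheet $\sigma=P$ the interference constants are frozen, so the multicast and unicast subproblems \emph{decouple}: for a split $P_{un}+P_{mu}=P$ and the common optimizer $\tau^*=U+G$ of Theorems \ref{multicastOpt}--\ref{unicastOptZF}, the pair $\big(O_{mu}^*(P_{un}),O_{un}^*(P-P_{un})\big)$ is \emph{simultaneously} attainable (hence in $\mathcal{S}$) and, by Step~1, upper-bounds every full-power allocation with that split. Combining with Step~2, every $\mathbf{x}\in\mathcal{X}$ is weakly dominated by a curve point $\big(O_{mu}^*(t),O_{un}^*(P-t)\big)$ with $t=P_{un}\in[0,P]$. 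Corollaries \ref{Corollarymu} and \ref{Corollaryun} make $t\mapsto O_{mu}^*(t)$ strictly decreasing and $t\mapsto O_{un}^*(P-t)$ strictly increasing, while the closed forms \eqref{O1},\eqref{O1ZF},\eqref{O2},\eqref{O2ZF} make them continuous; hence for $t_1\neq t_2$ the two curve points are mutually non-dominating.

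Finally I would assemble $\mathcal{B}_s$. If some $\mathbf{y}$ dominated the $t_0$-curve point, Steps~2--3 give a $t_1$-curve point weakly dominating $\mathbf{y}$ and therefore dominating the $t_0$-point, which forces $t_1=t_0$ by strict monotonicity and collapses every inequality to equality---contradicting that the domination was strict in one coordinate; so each curve point is strongly Pareto optimal. Conversely, any weakly Pareto optimal $\mathbf{z}$ is full power (Step~2), and if its objective pair lay strictly below the $t=P_{un}(\mathbf{z})$ curve point in either coordinate, a slightly shifted curve point would (by continuity) strictly beat it in both, contradicting weak optimality; thus $\mathbf{z}$'s objectives coincide with a curve point, i.e.\ $\mathbf{z}\in\mathcal{B}_s$. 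This identifies $\mathcal{B}_s$ with the claimed set, shows $\mathcal{B}_w=\mathcal{B}_s$ (no purely weak Pareto points), and pins the achieving variables to those of Theorems \ref{multicastOpt}/\ref{unicastOpt} (MRT) and \ref{multicastOptZF}/\ref{unicastOptZF} (ZF). I expect Step~2---the scaling showing full power is compulsory---to be the main obstacle, since pure monotonicity via Step~1 leaves the slack-power case unresolved and only the explicit compensating inequality rules it out.
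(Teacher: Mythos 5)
Your proof is correct and reaches the paper's conclusion through the same overall skeleton (both set inclusions, driven by the monotonicity of $O_{mu}^{*}$ and $O_{un}^{*}$ from Corollaries \ref{Corollarymu} and \ref{Corollaryun}), but it inserts one genuinely different and valuable ingredient: the explicit compensating-scaling lemma of your Step~2, showing that any allocation with $P_{un}(\mathbf{x})+P_{mu}(\mathbf{x})<P$ is strictly dominated, so that every feasible point is weakly dominated by a full-power ``curve point.'' The paper's Appendix~C instead argues directly that $O_{mu}^{*}(P_{un})<O_{mu}(\mathbf{y})$ forces $\mathbf{y}$ to use strictly more multicast power; taken literally this inference is not implied by Theorem \ref{multicastOpt} and Corollary \ref{Corollarymu} alone, because a point with the \emph{same} multicast power but reduced unicast power sees less interference $\eta_{jk}(P_{un}+P_{mu})$ in \eqref{SINR.MRT.mu} and could also exceed $O_{mu}^{*}(P_{un})$ --- precisely the slack-power case your Step~2 disposes of. So your route is slightly longer but closes a step the paper leaves implicit, and your continuity/shift argument in Step~4 also gives a cleaner, explicit proof that $\mathcal{B}_{w}=\mathcal{B}_{s}$. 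Two small caveats: (i) the pure downlink scaling does not literally yield a \emph{strict} increase of both objectives in degenerate cases (e.g., all $p_m^{dl}=0$, some $q_j^{dl}=0$, or some $q_{jk}^{up}=0$, where an objective is stuck at zero); these need a one-line patch in which part of the slack power (or pilot energy) is reassigned to the starved service, and (ii) your master inequalities rely on the shared $\tau$ being non-binding, which holds here only because both subproblems are separately optimized at $\tau^{*}=U+G$ --- worth stating explicitly. Neither caveat affects the validity of the argument.
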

\begin{proof}
	The proof is given in Appendix C.
\end{proof}

Theorem \ref{Theo_ParetoBound} is of significant interest due to the following reasons. First, it describes all the Pareto optimal points that can be obtained in the considered system. As each Pareto optimal point describes a particular trade-off between the two objectives (SSE and MMF), it elaborates the set of efficient operating points from which the network designer can select one, which is most desirable for the designer. Second, Theorem \ref{Theo_ParetoBound} not only describes the Pareto boundary points, but also determines the exact values of the system parameters (uplink training powers, downlink transmission powers, and the pilot length.) to achieve such points. Therefore, Theorem \ref{Theo_ParetoBound} exactly describes, in a joint unicast and multi-group multicast massive MIMO system, what can be achieved and how it can be achieved.

\begin{theorem}
	\label{convexity}
	 $\mathcal{S}$, the attainable set of $\mathcal{M}$, is a convex set.
\end{theorem}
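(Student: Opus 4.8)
The plan is to identify $\mathcal{S}$ with the \emph{comprehensive hull} (the componentwise-downward closure) of the strong Pareto boundary $\mathcal{B}_s$ obtained in Theorem~\ref{Theo_ParetoBound}, and then to show that this hull is convex because $\mathcal{B}_s$ is the graph of a concave, decreasing function. I would begin by noting that $\tau$ ranges over a finite set and all powers are bounded (downlink by $P$, uplink by the $E$'s), so $\mathcal{X}$ is compact and the maps $\mathbf{x}\mapsto O_{mu}(\mathbf{x})$ and $\mathbf{x}\mapsto O_{un}(\mathbf{x})$ are continuous; hence $\mathcal{S}$ is compact. Compactness together with Theorem~\ref{Theo_ParetoBound} guarantees that every attainable pair is componentwise dominated by some boundary point $(O_{mu}^{*}(P_{un}),O_{un}^{*}(P-P_{un}))\in\mathcal{B}_s$. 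For the reverse inclusion I would establish \emph{free disposal}: for a fixed split $P_{un}+P_{mu}=P$ the entire rectangle $[0,O_{mu}^{*}(P_{un})]\times[0,O_{un}^{*}(P_{mu})]$ is attainable. The crucial observation is that the downlink powers $\{p_m^{dl}\},\{q_j^{dl}\}$ (and therefore $P_{un}$, $P_{mu}$ and the interference terms $\beta_m(P_{un}+P_{mu})$, $\eta_{jk}(P_{un}+P_{mu})$) can be \emph{held fixed} while the uplink pilot powers are lowered: decreasing $\{q_{jk}^{up}\}$ shrinks $\xi_{jk}$ and hence only the multicast numerators, driving $O_{mu}$ continuously down to $0$ without touching $O_{un}$, and symmetrically decreasing $\{p_m^{up}\}$ shrinks $\vartheta_m$ and drives $O_{un}$ down to $0$ without touching $O_{mu}$. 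Thus the two objectives can be throttled independently, which fills each rectangle and yields $\mathcal{S}=\bigcup_{P_{un}\in[0,P]}[0,O_{mu}^{*}(P_{un})]\times[0,O_{un}^{*}(P-P_{un})]$.

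Next I would prove that $\mathcal{B}_s$ is concave. Parameterizing by $P_{mu}\in[0,P]$, set $x(P_{mu})=O_{mu}^{*}(P-P_{mu})$ and $y(P_{mu})=O_{un}^{*}(P_{mu})$. By \eqref{O1} (resp.\ \eqref{O1ZF}) the coordinate $x$ is a positive constant times $\log_2$ of an affine function of $P_{mu}$, so $x$ is concave and, by Corollary~\ref{Corollarymu}, increasing in $P_{mu}$. By Theorems~\ref{unicastOpt}--\ref{unicastOptZF}, $y$ is the optimal value of the concave maximization $\mathcal{P}2$ whose only dependence on $P_{mu}$ enters through the linear budget $\sum_m p_m^{dl}\le P-P_{mu}$; the optimal value of a concave program is a concave function of its budget, so $y$ is concave, and by Corollary~\ref{Corollaryun} it is decreasing in $P_{mu}$. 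With both coordinate functions concave, Jensen's inequality applied coordinatewise shows that, for $P_{mu}^{(\theta)}=\theta P_{mu}^{(1)}+(1-\theta)P_{mu}^{(2)}$, the convex combination of the two boundary points at $P_{mu}^{(1)},P_{mu}^{(2)}$ is componentwise dominated by the boundary point at $P_{mu}^{(\theta)}$. Together with the endpoints $(O_{mu}^{*}(0),0)$ and $(0,O_{un}^{*}(P))$ lying on the axes, this exhibits $\mathcal{B}_s$ as the graph of a concave decreasing function.

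Finally I would combine the two parts. Take any $\mathbf{a},\mathbf{b}\in\mathcal{S}$ and $\theta\in[0,1]$. Each is dominated by a boundary point, $\mathbf{a}\preceq(x(P_{mu}^{(1)}),y(P_{mu}^{(1)}))$ and $\mathbf{b}\preceq(x(P_{mu}^{(2)}),y(P_{mu}^{(2)}))$, so $\theta\mathbf{a}+(1-\theta)\mathbf{b}\preceq\theta(x^{(1)},y^{(1)})+(1-\theta)(x^{(2)},y^{(2)})$, which in turn is dominated by the boundary point at $P_{mu}^{(\theta)}$ by the concavity just established. Since $\mathcal{S}$ is the comprehensive hull of $\mathcal{B}_s$, this dominated point belongs to $\mathcal{S}$, proving convexity. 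I expect the principal obstacles to be the rigorous justification of the two concavity claims---in particular that the water-filling value $O_{un}^{*}$ is concave in its power budget despite the nonsmooth $\max\{0,\cdot\}$ appearing in \eqref{P3Sol}/\eqref{P3SolZF}---and the free-disposal step, whose clean resolution hinges on the nonobvious fact that the uplink pilot powers let one attenuate each objective while leaving the shared total power, and hence all interference terms, undisturbed.
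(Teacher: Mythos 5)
Your proof is correct, and at the level of overall strategy it matches the paper's: both arguments show that the convex combination of two attainable points is componentwise dominated by the Pareto-optimal point at the averaged power split $P_{mu}(\alpha)=\alpha P_{mu}^{(1)}+(1-\alpha)P_{mu}^{(2)}$, and then descend from that dominating point to the exact target by a free-disposal/continuity argument. The differences are in where the work is done. The paper's key inequalities $O^{*}_{mu}\big(P-P_{mu}(\alpha)\big)\geq O_{mu}(\alpha)$ and $O^{*}_{un}\big(P_{mu}(\alpha)\big)\geq O_{un}(\alpha)$ are asserted with only a citation of Theorems~\ref{multicastOpt}--\ref{unicastOptZF}; they in fact require exactly the concavity-in-the-budget statements you prove explicitly ($O^{*}_{mu}$ is $\log_2$ of an affine function of $P_{mu}$ by \eqref{O1}/\eqref{O1ZF}, and $O^{*}_{un}$ is the value of a concave program perturbed in the right-hand side of a linear constraint, hence concave regardless of the nonsmooth water-filling form of the optimizer). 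Your version therefore closes a gap that the paper leaves implicit. Your free-disposal step also differs: the paper reduces $P_{mu}$ and $\{p_m^{dl}\}$, which couples the two objectives through the shared interference term $\beta_m(P_{un}+P_{mu})$ (resp.\ $\eta_{jk}(P_{un}+P_{mu})$) and so implicitly needs a two-dimensional intermediate-value argument; you instead throttle each objective by lowering only the uplink pilot powers, which leave all downlink powers and hence all interference terms untouched, so the two objectives decouple and each rectangle below the boundary is filled by two independent one-dimensional continuity arguments. That is cleaner, with the one caveat that for ZF you should keep the pilot powers strictly positive (so that $\hat{\mathbf{C}}^{H}\hat{\mathbf{C}}$ remains invertible) and reach the value $0$ only when it is itself attained by setting the corresponding downlink power to zero; this is a minor edge case, not a flaw in the argument.
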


\begin{proof}
	We prove the convexity of $\mathcal{S}$, using the definition of a convex set. Consider any two arbitrary tuples $\big( O_{mu}(\mathbf{x}), O_{un}(\mathbf{x})  \big) \in \mathcal{S}$ and $\big( O_{mu}(\mathbf{y}), O_{un}(\mathbf{y})  \big) \in \mathcal{S}$. Define
	\begin{align}
		\big( O_{mu}(\mathbf{\alpha}), O_{un}(\mathbf{\alpha})  \big) &= \alpha \left( O_{mu}(\mathbf{x}), O_{un}(\mathbf{x})  \right) + (1- \alpha) \big( O_{mu}(\mathbf{y}), O_{un}(\mathbf{y})  \big) \quad \alpha \in [0,1].
	\end{align}
	Now we show that $\big( O_{mu}(\mathbf{\alpha}), O_{un}(\mathbf{\alpha})  \big) \in \mathcal{S}$.	Define $P_{mu}(\mathbf{x}) = \sum_{j=1}^{G} q_{j}^{dl} (\mathbf{x})$, $P_{un}(\mathbf{x}) = \sum_{m=1}^{U} p_{m}^{dl} (\mathbf{x})$, $P_{mu}(\mathbf{y}) = \sum_{j=1}^{G} q_{j}^{dl} (\mathbf{y})$, and $P_{un}(\mathbf{y}) = \sum_{m=1}^{U} p_{m}^{dl} (\mathbf{y})$. As $\mathbf{x}, \mathbf{y} \in \mathcal{X}$, $P_{mu}(\mathbf{x}) + P_{un}(\mathbf{x}) \leq P$ and $P_{mu}(\mathbf{y}) + P_{un}(\mathbf{y}) \leq P$. Based on Theorem \ref{multicastOpt} (or Theorem \ref{multicastOptZF}), using $P_{mu}(\alpha) = \alpha P_{mu}(\mathbf{x}) + (1-\alpha)P_{mu}(\mathbf{y})$ for multicast transmission, we have $O^{*}_{mu}\big(P - P_{mu}(\alpha)\big) \geq O_{mu}(\alpha)$. As $P - P_{mu}(\alpha) \geq P_{un}(\alpha)$, based on Theorem \ref{unicastOpt} (or Theorem \ref{unicastOptZF}), using $P - P_{mu}(\alpha)$ for unicast transmission, we have $O^{*}_{un}\big(P_{mu}(\alpha)\big) \geq O_{un}(\alpha)$. Therefore $\forall \alpha \in [0,1]$, $\exists \big(O^{*}_{mu}, O^{*}_{un}\big) \in \mathcal{S} | \big(O^{*}_{mu}, O^{*}_{un}\big) \geq \big(O_{mu}(\alpha), O_{un}(\alpha)\big)$. As \eqref{O1} and \eqref{O2} (or \eqref{O1ZF} and \eqref{O2ZF}) are continuous, by reducing $P_{mu}$ and $\{p_{m}^{dl}\}$ in \eqref{O1} and \eqref{O2} (or \eqref{O1ZF} and \eqref{O2ZF}), we can obtain $\big(O_{mu}(\alpha), O_{un}(\alpha)\big)$, which completes the proof.
\end{proof}

The fact that $\mathcal{S}$ is convex implies that it is optimal to spatially multiplex the unicast and multicast transmission. In contrast, if the set would have been non-convex, time or frequency multiplexing would have been a better solution.

%-------------------------------------------------------------
%-------------------------------------------------------------
%-------------------------------------------------------------
%-------------------------------------------------------------
%-------------------------------------------------------------
%-------------------------------------------------------------
%-------------------------------------------------------------

\section{Numerical Results} 
In this section, we use numerical simulations leveraging our proposed results to present more insights into the behavior of joint unicast and multi-group multicast massive MIMO systems. We consider a single cell system where the cell radius is $500$ meters and the unicast and multicast UTs are randomly and uniformly distributed in the cell excluding an inner circle of radius $35$ meters. The large-scale fading coefficient for the multicast UT $k$ in group $j$ is modeled as $\beta_{jk}^{m} = \bar{d}/x_{jk}^{\nu}$, where $\nu=3.76$ is the path-loss exponent, $x_{jk}$ is the distance between the UT and the BS, and $\bar{d} = 10^{-3.5}$ is a constant that regulates the channel attenuation at $35$ meters \cite{3GPPmodel}.\footnote{The considered $\bar{d}$ gives us a pth-loss of $-136.48$ dB at $500$ meters and it can be any other reference distance.} Similarly, the large-scale fading coefficient for the unicast UT $m$ is modeled as $\beta_{m}=\bar{d}/x_{m}^{\nu}$, where $x_{m}$ is the distance between this UT and the BS.

At a carrier frequency of $2$ GHz, the transmission bandwidth is assumed to be $W=20$ MHz, the coherence bandwidth and coherence time are considered to be $200$ kHz and $1$ ms, which results in a coherence interval of length $200$ symbols \cite{marzetta2016fundamentals}. The noise power spectral density is considered to be $\sigma^{2} = -174$ dBm/Hz. The total downlink power $\bar{P}$ is set to $10$ Watts and its corresponding normalized value is $P = \bar{P} / (W \cdot \sigma^2)$. Also, the energy limits $E_m$ and $E_{jk}$ are normalized to $(0.1 \cdot T) / (W \cdot \sigma^2)$. In our simulations we consider a joint unicast and multi-group multicast system with $U=50$ unicast UTs and $G=10$ multicast groups each having $K=100$ multicast UTs, and present the results for both MRT and ZF precoding schemes. For the SSE, the weights are assumed to be equal, e.g., $\alpha_m = 1 \; \forall m \in \mathcal{U}$.

\begin{figure}[]
	\centering
	\begin{subfigure}[b]{0.48\linewidth}
		\centering
		\includegraphics[width=1\columnwidth, trim={1cm 3.5cm 1cm 5cm},clip]{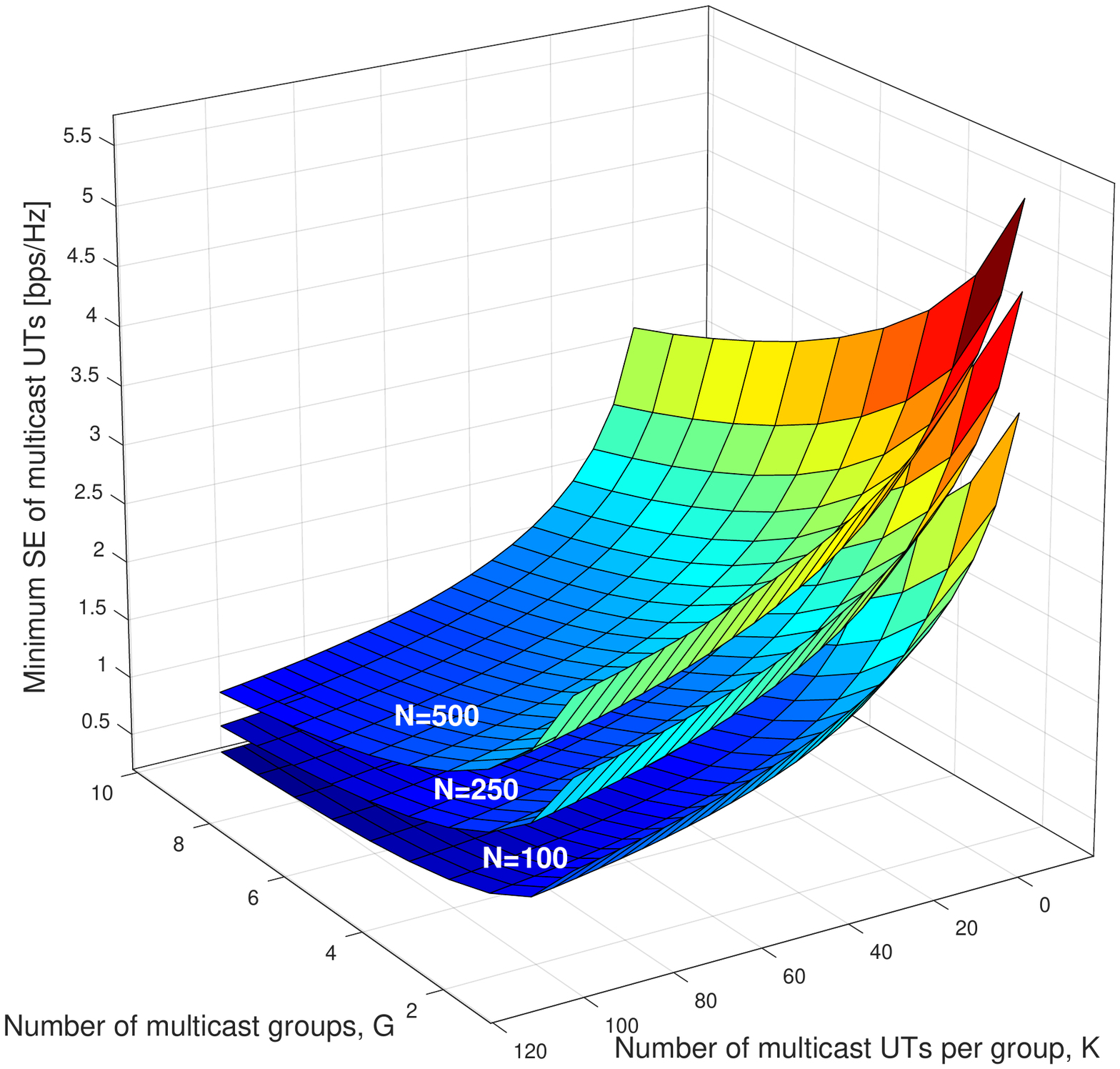}
		\caption{MRT Precoding.}
		\label{FigMuMRT}
	\end{subfigure}
~
	\begin{subfigure}[b]{0.48\linewidth}
		\includegraphics[width=1\columnwidth, trim={1cm 3.5cm 1cm 5cm},clip]{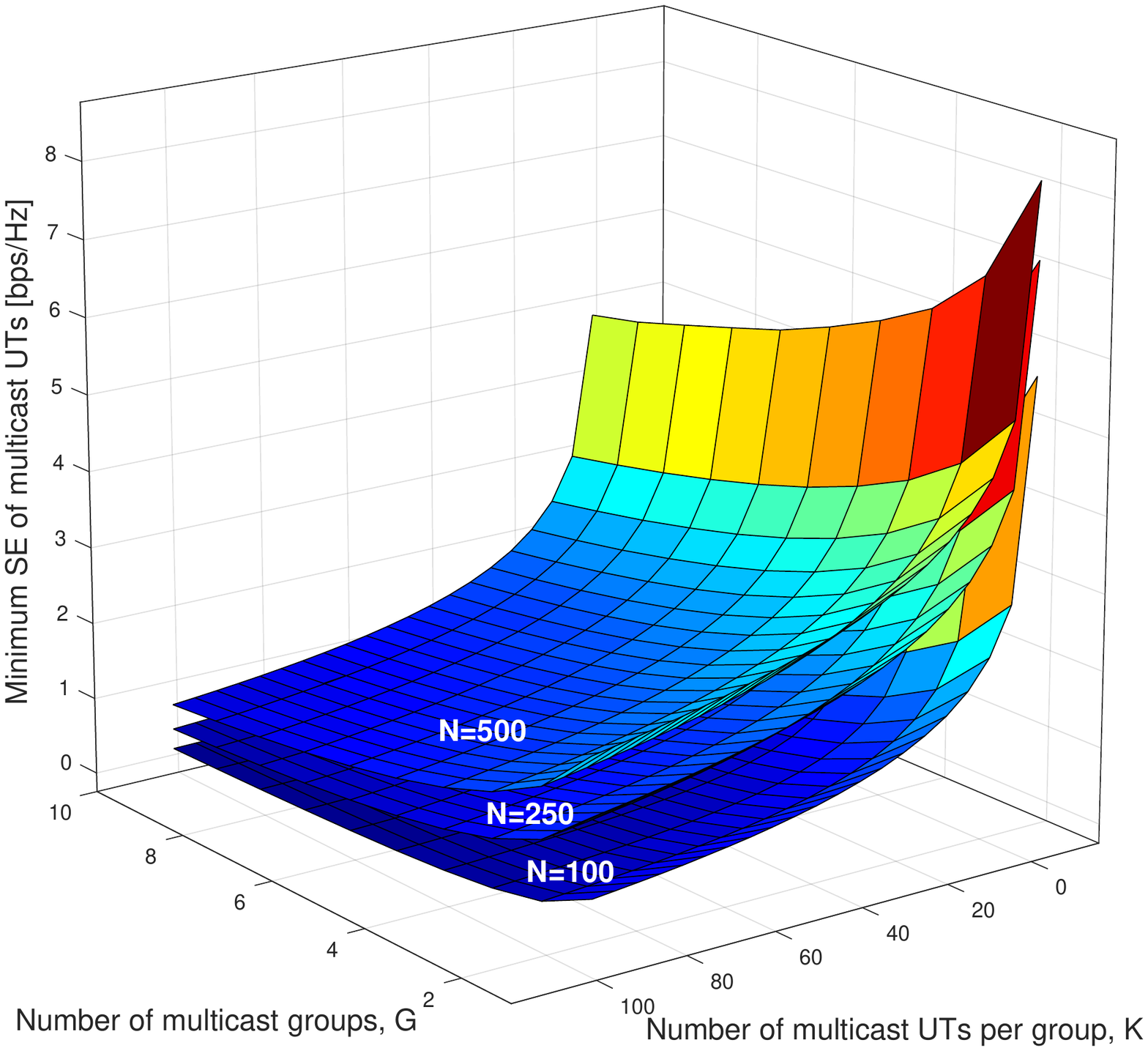}
		\caption{ZF Precoding.}
		\label{FigMuZF}
	\end{subfigure}
\caption{Minimum SE of multicast UTs versus $G$ and $K$.}
\label{FigMu}
\end{figure}

Fig. \ref{FigMu} presents the minimum SE of multicast UTs versus $G$ and $K$, while $P_{un} = P_{mu}$. Three surfaces are plotted, which are associated with $N=100$, $N=250$, and $N=500$. MRT and ZF have similar behavior, for any pair of $(G,K)$ as we increase $N$, we achieve a higher minimum SE for the multicast UTs. This is due to the improved coherent transmission of multicast signals, which is obtained by employing a large-scale antenna array. Also note that the minimum SE of the multicast UTs reduces by increasing $K$ or $G$. However this can be compensated by employing more antennas at the BS, e.g., for MRT precoding the minimum SE of a system with $(N=100,G=4,K=16)$ is equal to a system with $(N=500,G=8,K=46)$.

Fig. \ref{FigUn} presents the SSE of unicast UTs versus $N$ and $U$. For a fixed number of unicast UTs ($U$), the SSE improves as we increase the number of antennas ($N$). However, for a fixed number of antennas, as we increase the number of UTs, the SSE increases and then reduces. Thanks to the closed form expression for SSE of unicast UTs, obtained in Theorem \ref{unicastOpt}, one can easily determine the optimal number of unicast UTs that maximizes the SSE of unicast UTs of the system. Note that for ZF, we need $N> G+U$ to achieve a nonzero SSE (Fig. \ref{FigUnZF}), while the MRT precoder does not have this limitation (Fig. \ref{FigUnMRT}). On the other hand, when we have enough resources, e.g., $N \gg U+G$ and $T \gg U+G $, the SSE of ZF is considerably higher than MRT.

\begin{figure}[]
	\centering
	\begin{subfigure}[b]{0.48\linewidth}
		\centering
		\includegraphics[width=1 \columnwidth, trim={0cm 3cm 0.5cm 4cm},clip]{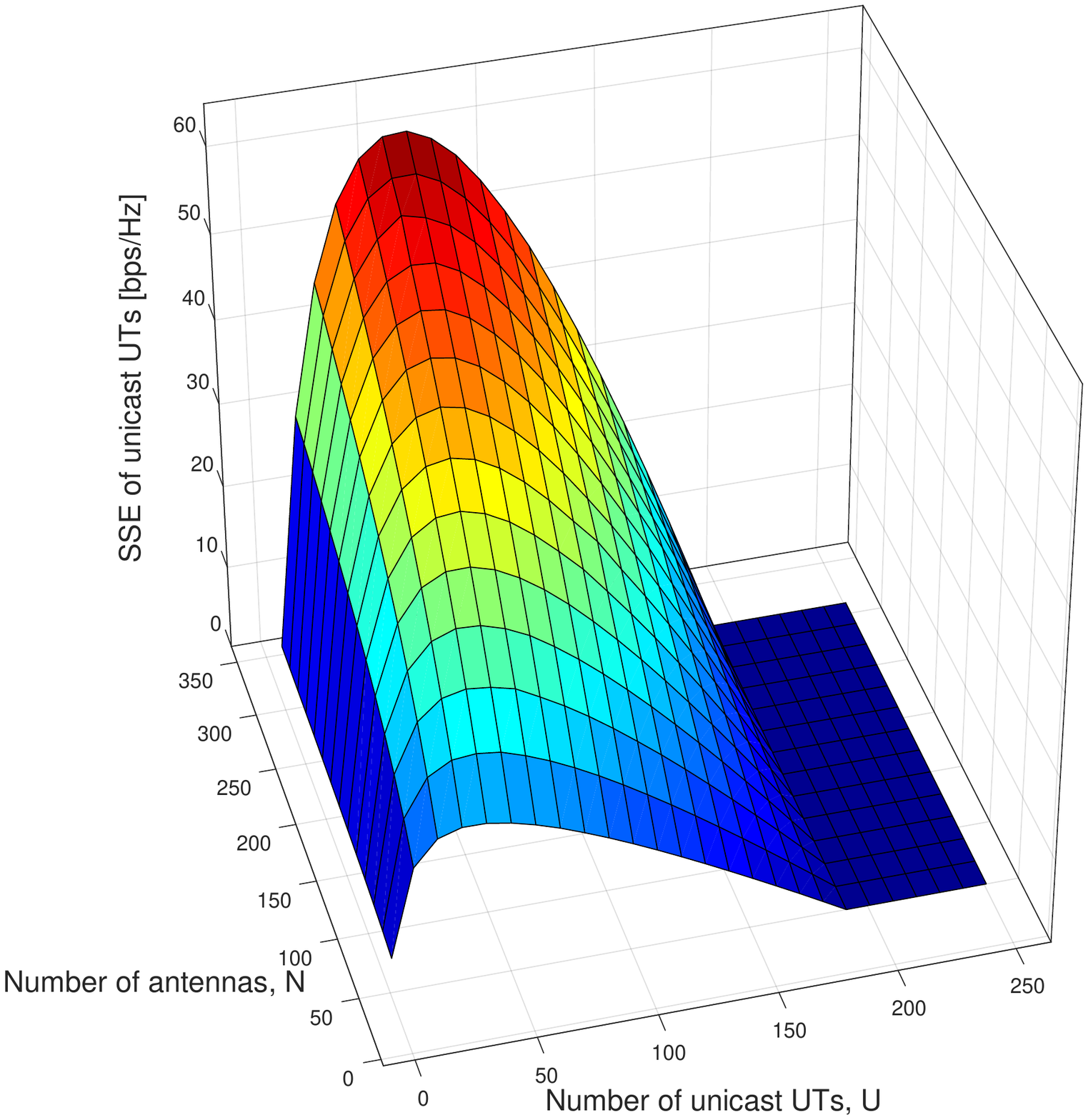}
		\caption{MRT Precoding.}
		\label{FigUnMRT}
	\end{subfigure}
	~
	\begin{subfigure}[b]{0.48\linewidth}
		\centering
		\includegraphics[width=1 \columnwidth, trim={0cm 3cm 0.5cm 4cm},clip]{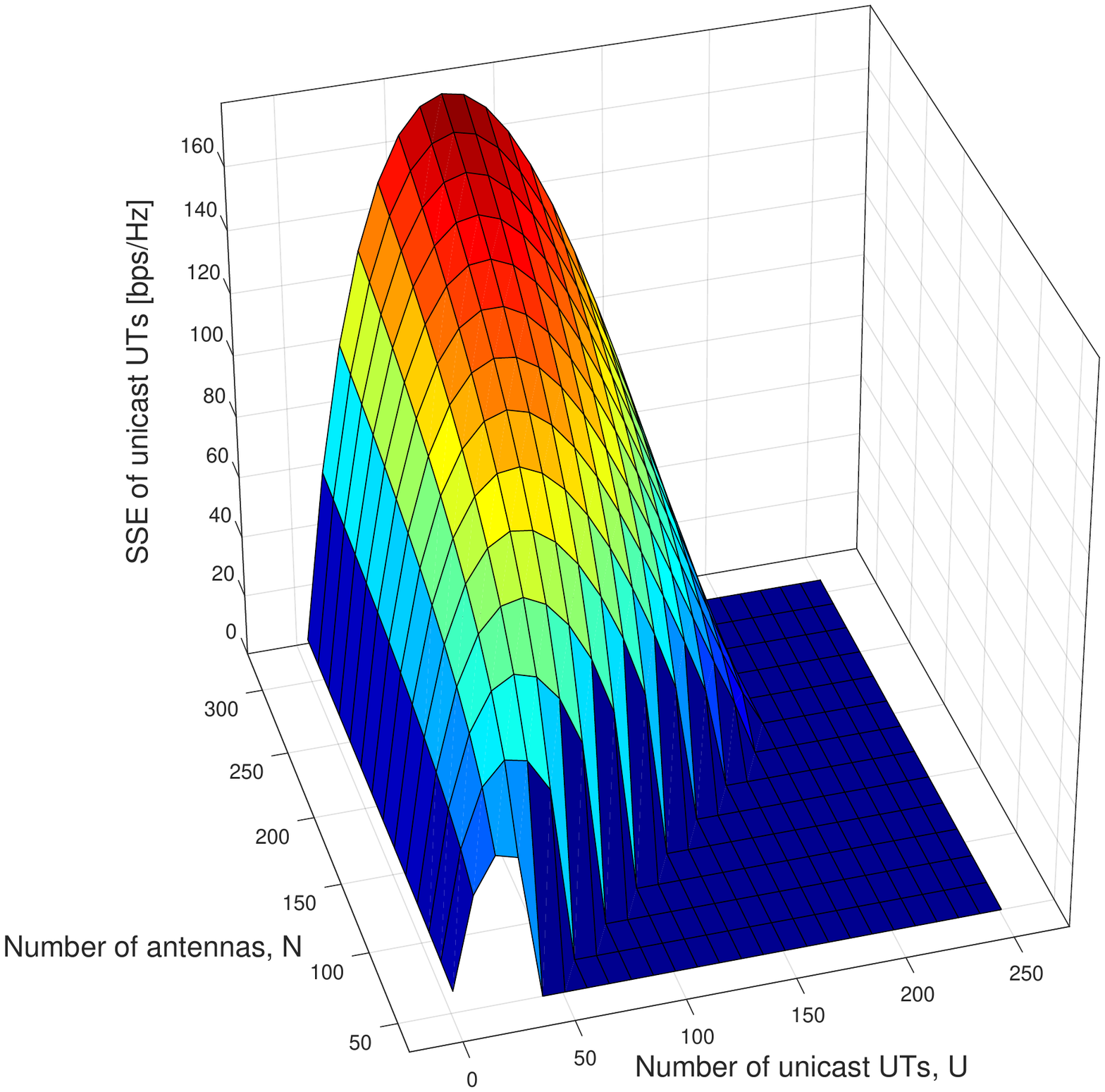}
		\caption{ZF Precoding.}
		\label{FigUnZF}
	\end{subfigure}
\caption{Sum SE (SSE) of unicast UTs versus $U$ and $N$}
\label{FigUn}
\end{figure}

Fig. \ref{FigConvex} presents the Pareto boundary of the MOOP $\mathcal{M}$, achieved based on Theorem \ref{Theo_ParetoBound}, for different number of antennas $N$ at the BS, and MRT and ZF precoding. The horizontal axis is the minimum SE of the multicast UTs and the vertical axis is the SSE of unicast UTs. For each value of $N$, the ratio $P_{un} / P$ is changed from $0$ to $1$ with steps of $P/20$, which are denoted by $21$ marker points. In both Figs. \ref{FigConvex.MRT} and \ref{FigConvex.ZF}, three radial lines are shown, which are presenting specific ratio between $P_{un}$ and $P_{mu}$ ($P_{un}=19P_{mu}$, $P_{un}=P_{mu}$, and $19P_{un}=P_{mu}$). First, the radial lines show adding more antennas improves the objective of both unicast and multicast transmissions. This is due to the improved coherent transmission of signals and it is obtained by employing a large-scale antenna array. Second, recall from Section V that each marker point in Fig. \ref{FigConvex} represents a particular trade-off between the two objectives of the system. Hence Fig. \ref{FigConvex} enables the network designer to allocate the resources according its requirements by selecting the appropriate trade-off. Third, note that the Pareto region for all the considered values of $N$ is convex, as proved in Theorem \ref{convexity}. Fourth, the Pareto boundary does not have any weak Pareto optimal points, as shown in Theorem \ref{Theo_ParetoBound}.

\begin{figure}[]
	\centering
	\begin{subfigure}[b]{0.48\linewidth}
		\centering
		\includegraphics[width=1\columnwidth, trim={2cm 6.5cm 1.8cm 7.2cm},clip]{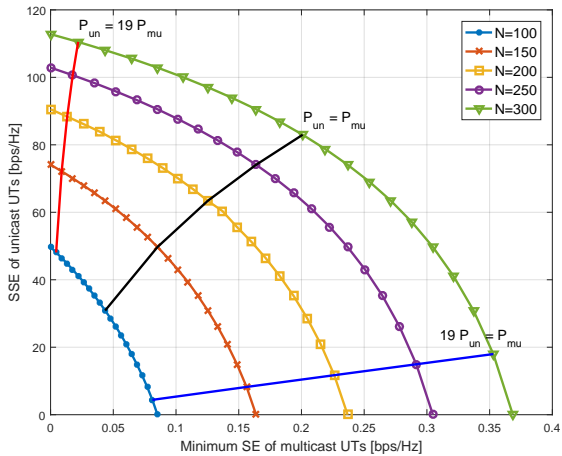} %trim={2cm 7cm 1.8cm 7cm}
		\caption{MRT Precoding.}
		\label{FigConvex.MRT}
	\end{subfigure}
	~
	\begin{subfigure}[b]{0.48\linewidth}
		\centering
		\includegraphics[width=1\columnwidth, trim={2cm 6.5cm 1.8cm 7.2cm},clip]{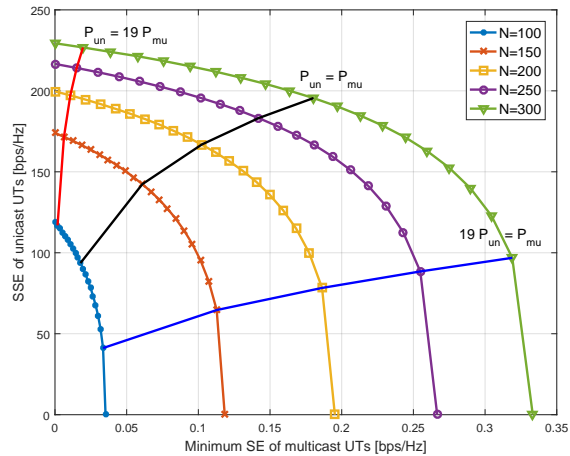}%{2cm 6.5cm 1.8cm 7.2cm}
		\caption{ZF Precoding.}
		\label{FigConvex.ZF}
	\end{subfigure}
\caption{Pareto boundary of MOOP $\mathcal{M}$, for different number of antennas $N$.}
\label{FigConvex}
\end{figure}

\section{Conclusion}
In this paper, we studied the joint unicast and multi-group multicast transmission in massive MIMO systems. Knowing that these systems may have two different and possibly conflicting objectives to fulfill, e.g., MMF for multicast UTs and SSE maximization for unicast UTs, we investigate how to define a joint notion of optimality for them. Therefore, we presented a multiobjective optimization framework to study the conflicting objectives of multicasting and unicasting. We showed that the objective of multicasting cannot be increased without reducing the objective of unicasting, and vice versa. Therefore, we introduced a MOOP and derived its Pareto boundary analytically. Moreover, for any desired point on the Pareto boundary, we determined the value of each of the decision variables such that the desired Pareto boundary point is achieved. We further proved that the Pareto region is convex, thus any point of the Pareto boundary can be obtained by spatial multiplexing of the unicast and multicast UTs. In other words, practical systems can without loss of optimality allow for coexistence between these two use cases. We should note that the multicast transmission in the context of massive MIMO has much potential for further research. For example hybrid precoding, physical layer security, and hardware impairments in the context of massive MIMO multicasting are promising research directions, that have been overlooked so far. We think these aspects would gain more interest in the upcoming years due to the limited existing works and their practicality, i.e., \cite{DemirIcassp,mmWaveMulticast,DaiHPMulcsing,EW18}.

\section{Appendices}

%\section*{Appendix A - Proof of Theorem 1}
%\input{Proof_T1_VeryShrot} % this is a very very short proof without any details and just main guidelines

% ************** if you need more details for thesis **********************
%\input{MRT_Short}            % this is the long proof for unicast
%\input{MRT_Detailed}         % this is very detailed (very long) proof of Unicast part
%\input{MRT_VeryShort}        % this is the shrot proof for unicast
%\input{Multicast_Detailed}   % this is very detailed proof of Multicast part

\section*{Appendix A - Proof of Theorem \ref{multicastOpt}}
First we replace the constraint (\ref{MMF}-iv) in $\mathcal{P}1$ with $U\!\!+G \leq \tau \leq \! G $ and treat $\tau$ as a real-valued variable, and name the obtained relaxed problem $\mathcal{A}1$. Note that the solution to $\mathcal{A}1$ is an upper bound on the solution to $\mathcal{P}1$. In the sequel we solve $\mathcal{A}1$ and show that its optimal solution is also a feasible solution of $\mathcal{P}1$, hence it is the optimal solution. Considering $\mathcal{A}1$, now we show that at the optimal solution $P_{mu} = \sum_{j=1}^{G} q_{j}^{dl}$ is equal to $P - P_{un}$. Assume the contrary, then $\forall j$ one can increase $q_{j}^{dl}$ to $\theta  q_{j}^{dl}$ with $\theta = \frac{P-P_{un}}{P_{mu}} > 1$. Therefore the SE of each UT $k$ in group $j$ increases to 
\begin{align}
(1 - \dfrac{\tau}{T}) \log_{2}\left(1+ \dfrac{N q_{j}^{dl} \xi_{jk} }{ \theta^{-1}(1+\eta_{jk} P_{un})  + \eta_{jk}  P_{mu}}\right)
\end{align}
and the optimal solution of $\mathcal{A}1$ improves, which contradicts our optimality assumption. Now let us denote the user with the minimum SE in $j$th multicast group as $kmin_{j}$, i.e., $kmin_{j} = \argmin_{k \in \mathcal{K}_{j}} \mathrm{SE}_{jk,mu}^{\mathrm{MRT}}$. We will prove that at the optimal solution $\forall  i,j \in \mathcal{G}, \mathrm{SE}_{jkmin_{j},mu}^{\mathrm{MRT}} = \mathrm{SE}_{ikmin_{i},mu}^{\mathrm{MRT}}$. Assume the contrary, then $\exists i,j \in \mathcal{G} $ such that $\mathrm{SE}_{jkmin_{j},mu}^{\mathrm{MRT}} > \mathrm{SE}_{ikmin_{i},mu}^{\mathrm{MRT}}$, where $kmin_{i}$ is the user with minimum SE in the system. Now if we respectively change $q_{j}^{dl}$ and $q_{i}^{dl}$ to $q_{j}^{dl} - \delta$ and $q_{i}^{dl} + \delta$ with $0 < \delta < \frac{1+\eta_{jk}P}{N \xi_{jk}} (\mathrm{SINR}_{jkmin_{j},mu}^{\mathrm{MRT}} - \mathrm{SINR}_{ikmin_{i},mu}^{\mathrm{MRT}})$, we can increase $\mathrm{SE}_{ikmin_{i},mu}^{\mathrm{MRT}}$, while ensuring the new value of $\mathrm{SE}_{jkmin_{j},mu}^{\mathrm{MRT}}$ is still bigger than its previous value of $\mathrm{SE}_{ikmin_{i},mu}^{\mathrm{MRT}}$. Repeating this procedure (in case we have multiple UTs with the same minimum SE) we can increase the optimal solution, which contradicts our assumption. Now we prove that $\forall j \in \mathcal{G}, \forall k,s \in \mathcal{K}_{j}, \; \mathrm{SE}_{jk,mu}^{\mathrm{MRT}} = \mathrm{SE}_{js,mu}^{\mathrm{MRT}}$. Assume the contrary then $\exists  k,s \in \mathcal{K}_{j}, \mathrm{SE}_{jk,mu}^{\mathrm{MRT}} > \mathrm{SE}_{js,mu}^{\mathrm{MRT}}$. Denote $x_{jk} = \tau q_{jk}^{up}$. Now we can improve the minimum SE of this group (and the network) by reducing $x_{jk}$ to $x_{jk} - \delta$ with $0 < \delta < (\mathrm{SINR}_{jk,mu}^{\mathrm{MRT}} - \mathrm{SINR}_{js,mu}^{\mathrm{MRT}}) \frac{(1+\sum_{t=1}^{K_{j}} x_{jt} \eta_{jt})(1+\eta_{jk}P)}{N q_{j}^{dl} \eta_{jk}^{2}}$, which is a contradiction. Therefore, at the optimal solution of $\mathcal{A}1$, the SEs are equal, and for multicast group $j$ we have
\begin{align}
\Upsilon_{j} = \dfrac{x_{jk} \eta_{jk}^{2}}{1+\eta_{jk}P} = \dfrac{x_{jt} \eta_{jt}^{2}}{1+\eta_{jt}P} \quad \forall k,t \in \mathcal{K}_{j} \; \forall j \in \mathcal{G}
\end{align} 
where $\Upsilon_{j}$ is a constant. Note that $\mathrm{SE}_{jk,mu}^{\mathrm{MRT}}$ is strictly increasing with $x_{jk}$ and $x_{jk} \leq E_{jk}$, hence its optimal value becomes $x_{jk}^{*} = \frac{1+ \eta_{jk}P}{\eta_{jk}^{2}} \Upsilon_{j}$ with $\Upsilon_{j} = \min_{k \in \mathcal{K}_{j}} \frac{E_{jk} \eta_{jk}^{2}}{1+\eta_{jk}P}$. As for every given $x_{jk}$, the $\mathrm{SE}_{jk,mu}^{\mathrm{MRT}}$ increases monotonically by reducing $\tau$ we have
\begin{align}
\tau^{*} &= U+G \label{tauA1}
\\
q_{jk}^{up*} &= \frac{1+ \eta_{jk}P}{(U+G)\eta_{jk}^{2}} \Upsilon_{j}.
\end{align}
Therefore $\mathrm{SINR}_{jk,mu}^{\mathrm{MRT}*} = \Upsilon_{j} \frac{N q_{j}^{dl*}}{1 + \sum_{t=1}^{K_{j}} x_{jt}^{*} \eta_{jt}}$. Also, as we showed, at the optimal solution all the UTs will have equal SINRs (also equal SEs), i.e., $\Gamma = \mathrm{SINR}_{jk,mu}^{\mathrm{MRT}*} \; \forall j,k$, where $\Gamma$ is a constant. Therefore $q_{j}^{dl*} = \frac{\Gamma}{N \Upsilon_{j}}  (1 + \sum_{t=1}^{K_{j}} x_{jt}^{*} \eta_{jt})$. Summing over all groups and by straightforward operations we have
\begin{align}
\Gamma = \dfrac{P_{mu} N}{P \sum_{j=1}^{G} K_{j} + \sum_{j=1}^{G} \frac{1}{\Upsilon_{j}}  + \sum_{j=1}^{G} \sum_{t=1}^{K_{j}} \frac{1}{\eta_{jt}}}
\end{align}
and the optimal SE follows \eqref{O1}. Note that as \eqref{tauA1} satisfies (\ref{MMF}-iv), the obtained solution is the optimal solution of $\mathcal{P}1$.

%------------------------------------------------------------------
%------------------------------------------------------------------
%------------------------------------------------------------------
%------------------------------------------------------------------
%------------------------------------------------------------------
%------------------------------------------------------------------
%------------------------------------------------------------------
%------------------------------------------------------------------

\section*{Appendix B - Proof of Theorem \ref{unicastOpt}}

Similar to Appendix A, we replace the constraint (\ref{sumrate}-iv) in $\mathcal{P}2$ with $U\!\!+G \leq \tau \leq \! G $, and name the obtained relaxed optimization problem $\mathcal{A}2$. Note that $\mathcal{A}2$ gives us an upper bound of $\mathcal{P}2$. In the sequel we solve $\mathcal{A}2$ and show that its optimal solution is also a feasible solution of $\mathcal{P}2$, hence it is the optimal solution. Considering $\mathcal{A}2$, we will show that at the optimal solution $P_{un} = \sum_{m=1}^{U} p_{m}^{dl}$ is equal to $P - P_{mu}$.  Assume the contrary, then at the optimal solution $\exists \; \rho>1 : \rho P_{un} = P - P_{mu}$. Now we increase $p_{m}^{dl} \; \forall m$ to $\rho p_{m}^{dl}$, which increases the rate of each UT $m$ to $\log_{2} (1 + \dfrac{N p_{m}^{dl} \vartheta_{m} }{\frac{1+\beta_{m} P_{mu}}{\rho}  + \beta_{m} P_{un} } )$ and hence improves the optimal solution which contradicts our assumption. So at the optimal solution we have $P=P_{mu}+\sum_{m=1}^{U} p_{m}^{dl}$. Now, note that for $\mathcal{A}2$ the objective function is monotonically increasing with respect to $\vartheta_{m}$, which itself is an increasing function of $\tau p_{m}^{up}$. Also for any given value of $\tau p_{m}^{up}$, we can increase the prelog factor $(1-\frac{\tau}{T})$ without changing the SINR, by reducing $\tau$ and setting $p_{m}^{up} = \frac{E_{m}}{\tau}$. So at the optimal solution of $\mathcal{A}2$, $\tau^{*} \!\!= \! U \! + G$ and $p_{m}^{up*} \!= \frac{E_{m}}{U+G}$. Hence we can reduce $\mathcal{A}2$ to the following convex optimization problem
\begin{align}
\label{sum4}
\mathcal{A}3: \max_{ \{ p_{m}^{dl} \} }  & \sum_{m=1}^{U} \alpha_{m} \log_{2} (1 + \dfrac{N p_{m}^{dl} \vartheta_{m}^{*} }{1  + \beta_{m} P } ) 
\\
& s.t. \; \;  \sum_{m=1}^{U} p_{m}^{dl} = P - P_{mu} \tag{\ref{sum4}-i}
\\
& \quad \;\;\; 0 \leq p_{m}^{dl}   \tag{\ref{sum4}-ii}.
\end{align}
where $\vartheta_{m}^{*}=E_{m} \beta_{m}^{2} / (1+E_{m}\beta_{m})$. Now, $\mathcal{A}3$ follows the well-known water-filling structure and its solution is given in \eqref{P3Sol} and the optimal objective value becomes \eqref{O2}. Note that as the obtained solution is also a feasible solution of $\mathcal{P}2$, it is the optimal solution of $\mathcal{P}2$.

%------------------------------------------------------------------
%------------------------------------------------------------------
%------------------------------------------------------------------
%------------------------------------------------------------------
%------------------------------------------------------------------
%------------------------------------------------------------------
%------------------------------------------------------------------
%------------------------------------------------------------------

\section*{Appendix C - Proof of Theorem \ref{Theo_ParetoBound}}
First note that $\forall (O_{mu}^{*}(P_{un}), O_{un}^{*}(P_{mu})) \in \mathcal{B}_{s}$, as $O_{mu}^{*}(P_{un})$ and $O_{un}^{*}(P_{mu})$ are the optimal solutions of $\mathcal{P}1$ and $\mathcal{P}2$, they are obtained for a feasible point $\mathbf{x}^{*}$ in $\mathcal{X}$, where
\begin{align*}
\mathbf{x}^{*} =  ( \{ q_{j}^{dl*} \}, \{ q_{jk}^{up*} \},  \{ p_{m}^{dl*} \}, \{ p_{m}^{up*} \}, \tau^{*} ).
\end{align*}
Now we prove, every point in $\mathcal{B}_{s}$ is a strong Pareto boundary point, and then we prove, every strong Pareto boundary point is in $\mathcal{B}_{s}$. 

First, we prove that $ \forall (O_{mu}^{*}(P_{un}),O_{un}^{*}(P_{mu})) \in \mathcal{B}_{s}$, it is a strong Pareto boundary point. Assume the contrary, then there exists a $\mathbf{y} \in \mathcal{X}$ such that $ O_{mu}^{*}(P_{un}) < O_{mu}(\mathbf{y}) , O_{un}^{*}(P_{mu})\leq O_{un}(\mathbf{y})$ (the case $ O_{mu}^{*}(P_{un}) \leq O_{mu}(\mathbf{y}) , O_{un}^{*}(P_{mu}) <  O_{un}(\mathbf{y})$ has been omitted for brevity and follows similar procedure). As $O_{mu}^{*}(P_{un})$ is the optimal objective value for the given $P_{un}$, we have $P_{mu} = P - P_{un}$. Based on Theorem \ref{multicastOpt} (or Theorem \ref{multicastOptZF}) and Corollary \ref{Corollarymu}, $ O_{mu}^{*}(P_{un}) < O_{mu}(\mathbf{y})$ implies that $O_{mu}(\mathbf{y})$ is obtained for a higher amount of multicasting power, i.e., $P_{mu} + \delta$ with $0 < \delta$. Therefore the remaining power for unicast transmission in $\mathbf{y}$ is $P - P_{mu} - \delta$. Also the unicast power used by the considered point in $\mathcal{B}_{s}$ is $P_{un} = P - P_{mu}$. Now as $O_{un}^{*}(P_{mu})$ has been allocated more power for unicast transmission than $O_{un}(\mathbf{y})$ and based on Theorem \ref{unicastOpt} (or Theorem \ref{unicastOptZF}) and Corollary \ref{Corollaryun}, $O_{un}^{*}(P_{mu}) > O_{un}(\mathbf{y})$, which contradicts our assumption. Therefore, every point in $\mathcal{B}_{s}$ is a strong Pareto boundary point.

Assume $\mathbf{y} = ( \{ q_{j-y}^{dl} \}, \{ q_{jk-y}^{up} \},  \{ p_{m-y}^{dl} \}, \{ p_{m-y}^{up} \}, \tau_{y} ) \in \mathcal{X}$ and $\mathbf{y} \notin \mathcal{B}_{s}$ is a strong Pareto boundary point, which results in $O_{mu}(\mathbf{y})$ and $O_{un}(\mathbf{y})$ for problems $\mathcal{P}1$ and $\mathcal{P}2$. Denote $P_{mu-y} = \sum_{j=1}^{G} q_{j-y}^{dl}$ and $P_{un-y} = \sum_{m=1}^{U} p_{m-y}^{dl}$. Now select the point $(O_{mu}^{*}(P_{un}), O_{un}^{*}(P_{mu})) \in \mathcal{B}_{s}$ such that $P_{mu} = P_{mu-y}$. Note that such a point exists as $0 \leq P_{mu-y} \leq P $. Then $P_{un} = P - P_{mu-y} \geq P_{un-y}$. Now note that as $O_{mu}^{*}(P_{un})$ is the optimal value of $\mathcal{P}1$, $O_{mu}^{*}(P_{un}) \geq O_{mu}(\mathbf{y})$. Now, if $P_{un} > P_{un-y}$, then based on Theorem \ref{unicastOpt} (or Theorem \ref{unicastOptZF}) and Corollary \ref{Corollaryun}, $O_{un}^{*}(P_{mu}) > O_{un}(\mathbf{y})$, and hence $\mathbf{y}$ is not a Pareto optimal point, which contradicts our assumption. If $P_{un} = P_{un-y}$ and $O_{un}^{*}(P_{mu}) > O_{un}(\mathbf{y})$, again  $\mathbf{y}$ is not a Pareto optimal point and it contradicts our assumption. Finally if $P_{un} = P_{un-y}$ and $O_{un}^{*}(P_{mu}) = O_{un}(\mathbf{y})$, then $\mathbf{y} \in \mathcal{B}_{s}$, which also contradicts our assumption. Therefore, every strong Pareto boundary point is in $\mathcal{B}_{s}$.

\bibliographystyle{IEEEtran}
\bibliography{IEEEabrv,multicastingMassive}

% Generated by IEEEtran.bst, version: 1.13 (2008/09/30)
\begin{thebibliography}{10}
\providecommand{\url}[1]{#1}
\csname url@samestyle\endcsname
\providecommand{\newblock}{\relax}
\providecommand{\bibinfo}[2]{#2}
\providecommand{\BIBentrySTDinterwordspacing}{\spaceskip=0pt\relax}
\providecommand{\BIBentryALTinterwordstretchfactor}{4}
\providecommand{\BIBentryALTinterwordspacing}{\spaceskip=\fontdimen2\font plus
\BIBentryALTinterwordstretchfactor\fontdimen3\font minus
  \fontdimen4\font\relax}
\providecommand{\BIBforeignlanguage}[2]{{%
\expandafter\ifx\csname l@#1\endcsname\relax
\typeout{** WARNING: IEEEtran.bst: No hyphenation pattern has been}%
\typeout{** loaded for the language `#1'. Using the pattern for}%
\typeout{** the default language instead.}%
\else
\language=\csname l@#1\endcsname
\fi
#2}}
\providecommand{\BIBdecl}{\relax}
\BIBdecl

\bibitem{ICASSP18}
M.~Sadeghi, E.~Bj{\"{o}}rnson, E.~G. Larsson, C.~Yuen, and T.~L. Marzetta,
  ``{MRT-based} joint unicast and multi-group multicast transmission in massive
  {MIMO} systems,'' in \emph{Proc. IEEE International Conference on Acoustics,
  Speech and Signal Processing (ICASSP' 18)}, Calgary, Alberta, Apr. 2018, to
  appear.

\bibitem{EricssonMobilityReport}
K.~Barboutov, A.~Furusk\"{a}r, R.~Inam, P.~Lindberg, K.~\"{O}hman, J.~Sachs,
  R.~Sveningsson, J.~Torsner, and K.~Wallstedt, ``Ericsson mobility report,''
  Stockholm, Sweden, Tech. Rep., 2017.

\bibitem{lecompte2012evolved}
D.~Lecompte and F.~Gabin, ``Evolved multimedia broadcast/multicast service
  {(eMBMS)} in {LTE-advanced}: {O}verview and {Rel-11} enhancements,''
  \emph{IEEE Commun. Mag.}, vol.~50, no.~11, pp. 68--74, Nov. 2012.

\bibitem{sidiropoulos2006transmit}
N.~D. Sidiropoulos, T.~N. Davidson, and Z.-Q. Luo, ``Transmit beamforming for
  physical-layer multicasting,'' \emph{IEEE Trans. Signal Process.}, vol.~54,
  no.~6, pp. 2239--2251, Jun. 2006.

\bibitem{karipidis2008quality}
E.~Karipidis, N.~D. Sidiropoulos, and Z.-Q. Luo, ``Quality of service and
  max-min fair transmit beamforming to multiple cochannel multicast groups,''
  \emph{IEEE Trans. Signal Process.}, vol.~56, no.~3, pp. 1268--1279, Mar.
  2008.

\bibitem{christopoulos2014weighted}
D.~Christopoulos, S.~Chatzinotas, and B.~Ottersten, ``Weighted fair multicast
  multigroup beamforming under per-antenna power constraints,'' \emph{IEEE
  Trans. Signal Process.}, vol.~62, no.~19, pp. 5132--5142, Oct. 2014.

\bibitem{tran2014conic}
L.~N. Tran, M.~F. Hanif, and M.~Juntti, ``A conic quadratic programming
  approach to physical layer multicasting for large-scale antenna arrays,''
  \emph{IEEE Signal Process. Lett.}, vol.~21, no.~1, pp. 114--117, Jan. 2014.

\bibitem{christopoulos2015multicast}
D.~Christopoulos, S.~Chatzinotas, and B.~Ottersten, ``Multicast multigroup
  beamforming for per-antenna power constrained large-scale arrays,'' in
  \emph{Proc. IEEE Int. Workshop Signal Process. Adv. Wireless Commun.
  (SPAWC)}, Jun. 2015, pp. 271--275.

\bibitem{MeysamMultiComplexity}
M.~Sadeghi, L.~Sanguinetti, R.~Couillet, and C.~Yuen, ``Reducing the
  computational complexity of multicasting in large-scale antenna systems,''
  \emph{IEEE Trans. Wireless Commun.}, vol.~16, no.~5, pp. 2963--2975, May
  2017.

\bibitem{Zhengzheng2014}
Z.~Xiang, M.~Tao, and X.~Wang, ``Massive {MIMO} multicasting in noncooperative
  cellular networks,'' \emph{IEEE J. Sel. Areas Commun.}, vol.~32, no.~6, pp.
  1180--1193, Jun. 2014.

\bibitem{sadeghi2015multi}
M.~Sadeghi and C.~Yuen, ``Multi-cell multi-group massive {MIMO} multicasting:
  An asymptotic analysis,'' in \emph{Proc. IEEE Global Commun. Conf.
  (Globecom)}, San Diego, CA, Dec. 2015, pp. 1--6.

\bibitem{YangMulticat}
H.~Yang, T.~L. Marzetta, and A.~Ashikhmin, ``Multicast performance of
  large-scale antenna systems,'' in \emph{2013 IEEE 14th Workshop on Signal
  Processing Advances in Wireless Communications (SPAWC)}, Jun. 2013, pp.
  604--608.

\bibitem{GC2017}
\BIBentryALTinterwordspacing
M.~Sadeghi, E.~Bj{\"{o}}rnson, E.~G. Larsson, C.~Yuen, and T.~L. Marzetta,
  ``Multigroup multicast precoding in massive {MIMO},'' in \emph{Proc. IEEE
  Global Commun. Conf. (Globecom)}, Singapore, Dec. 2017. [Online]. Available:
  \url{https://arxiv.org/abs/1709.07721}
\BIBentrySTDinterwordspacing

\bibitem{MeysamMMFTWC}
------, ``Max-min fair transmit precoding for multi-group multicasting in
  massive {MIMO},'' \emph{IEEE Trans. Wireless Commun.}, vol.~17, no.~2, pp.
  1358--1373, Feb. 2017.

\bibitem{marzetta2010noncooperative}
T.~L. Marzetta, ``Noncooperative cellular wireless with unlimited numbers of
  base station antennas,'' \emph{IEEE Trans. Wireless Commun.}, vol.~9, no.~11,
  pp. 3590--3600, Nov. 2010.

\bibitem{hoydis2013massive}
J.~Hoydis, S.~ten Brink, and M.~Debbah, ``Massive {MIMO} in the {UL/DL} of
  cellular networks: {H}ow many antennas do we need?'' \emph{IEEE J. Sel. Areas
  Commun.}, vol.~31, no.~2, pp. 160--171, Feb. 2013.

\bibitem{ngo2013energy}
H.~Q. Ngo, E.~G. Larsson, and T.~L. Marzetta, ``Energy and spectral efficiency
  of very large multiuser {MIMO} systems,'' \emph{IEEE Trans. on Commun.},
  vol.~61, no.~4, pp. 1436--1449, Apr. 2013.

\bibitem{LSAPowerNorm}
M.~Sadeghi, L.~Sanguinetti, R.~Couillet, and C.~Yuen, ``Large system analysis
  of power normalization techniques in massive {MIMO},'' \emph{IEEE Trans. Veh.
  Technol.}, vol.~66, no.~10, pp. 9005--9017, Oct. 2017.

\bibitem{boccardi2014five}
F.~Boccardi, R.~W. Heath, A.~Lozano, T.~L. Marzetta, and P.~Popovski, ``Five
  disruptive technology directions for {5G},'' \emph{IEEE Communications
  Magazine}, vol.~52, no.~2, pp. 74--80, 2014.

\bibitem{Ding_noma_uni_mul}
Z.~Ding, Z.~Zhao, M.~Peng, and H.~V. Poor, ``On the spectral efficiency and
  security enhancements of {NOMA} assisted multicast-unicast streaming,''
  \emph{IEEE Tran. Commun.}, vol.~65, no.~7, pp. 3151--3163, Jul. 2017.

\bibitem{Lv_mul_cognitive}
L.~Lv, J.~Chen, Q.~Ni, and Z.~Ding, ``Design of cooperative non-orthogonal
  multicast cognitive multiple access for {5G} systems: User scheduling and
  performance analysis,'' \emph{IEEE Transactions on Communications}, vol.~65,
  no.~6, pp. 2641--2656, Jun. 2017.

\bibitem{Zhang_NonOrtho_MBMS}
Z.~Zhang, Z.~Ma, M.~Xiao, G.~Liu, and P.~Fan, ``Modeling and analysis of
  non-orthogonal {MBMS} transmission in heterogeneous networks,'' \emph{IEEE J.
  Sel. Areas Commun.}, vol.~35, no.~10, pp. 2221--2237, Oct. 2017.

\bibitem{zhang2016massiveUniMul}
X.~Zhang, S.~Sun, F.~Qi, R.~Bo, R.~Q. Hu, and Y.~Qian, ``Massive {MIMO} based
  hybrid unicast/multicast services for {5G},'' in \emph{Global Communications
  Conference (GLOBECOM), 2016 IEEE}, Washington DC, WA, Dec. 2016, pp. 1--6.

\bibitem{larsson2016joint}
E.~G. Larsson and H.~V. Poor, ``Joint beamforming and broadcasting in massive
  {MIMO},'' \emph{IEEE Trans. Wireless Commun.}, vol.~15, no.~4, pp.
  3058--3070, Apr. 2016.

\bibitem{SumRateMaxPowerAllTWC}
Q.~Zhang, S.~Jin, M.~McKay, D.~Morales-Jimenez, and H.~Zhu, ``Power allocation
  schemes for multicell massive {MIMO} systems,'' \emph{IEEE Trans. Wireless
  Commun.}, vol.~14, no.~11, pp. 5941--5955, Nov. 2015.

\bibitem{LowComSumRateMassiveMIMO}
H.~V. Nguyen, V.~D. Nguyen, and O.~S. Shin, ``Low-complexity precoding for sum
  rate maximization in downlink massive {MIMO} systems,'' \emph{IEEE Wireless
  Commun. Lett.}, vol.~6, no.~2, pp. 186--189, Apr. 2017.

\bibitem{DualitySumrateGaussianMIMO}
S.~Vishwanath, N.~Jindal, and A.~Goldsmith, ``Duality, achievable rates, and
  sum-rate capacity of {G}aussian {MIMO} broadcast channels,'' \emph{IEEE
  Trans. Inf. Theory}, vol.~49, no.~10, pp. 2658--2668, Oct. 2003.

\bibitem{SumRateTWC}
S.~S. Christensen, R.~Agarwal, E.~D. Carvalho, and J.~M. Cioffi, ``Weighted
  sum-rate maximization using weighted {MMSE} for {MIMO-BC} beamforming
  design,'' \emph{IEEE Trans. Wireless Commun.}, vol.~7, no.~12, pp.
  4792--4799, Dec. 2008.

\bibitem{SumPower}
N.~Jindal, W.~Rhee, S.~Vishwanath, S.~A. Jafar, and A.~Goldsmith, ``Sum power
  iterative water-filling for multi-antenna {G}aussian broadcast channels,''
  \emph{IEEE Trans. Inf. Theory}, vol.~51, no.~4, pp. 1570--1580, Apr. 2005.

\bibitem{SumRateMaxTSP}
Q.~Shi, M.~Razaviyayn, Z.~Q. Luo, and C.~He, ``An iteratively weighted {MMSE}
  approach to distributed sum-utility maximization for a {MIMO} interfering
  broadcast channel,'' \emph{IEEE Trans. Signal Process.}, vol.~59, no.~9, pp.
  4331--4340, Sep. 2011.

\bibitem{sadeghiGC14austin}
M.~Sadeghi, C.~Yuen, and Y.~H. Chew, ``Sum rate maximization for uplink
  distributed massive {MIMO} systems with limited backhaul capacity,'' in
  \emph{2014 IEEE Globecom Workshops (GC Wkshps)}, Austin,TX, Dec. 2014, pp.
  308--313.

\bibitem{marzetta2016fundamentals}
T.~L. Marzetta, E.~G. Larsson, H.~Yang, and H.~Q. Ngo, \emph{Fundamentals of
  Massive {MIMO}}.\hskip 1em plus 0.5em minus 0.4em\relax Cambridge University
  Press, 2016.

\bibitem{YangLOS}
H.~Yang and T.~L. Marzetta, ``Massive {MIMO} with max-min power control in
  line-of-sight propagation environment,'' \emph{IEEE Trans. Commun.}, vol.~65,
  no.~11, pp. 4685--4693, 2017.

\bibitem{GaoMeasurements}
X.~Gao, O.~Edfors, F.~Rusek, and F.~Tufvesson, ``Massive {MIMO} performance
  evaluation based on measured propagation data,'' \emph{IEEE Trans. Wireless
  Commun.}, vol.~14, no.~7, pp. 3899--3911, Jul. 2015.

\bibitem{kay1993fundamentals}
S.~M. Kay, \emph{Fundamentals of statistical signal processing, volume I:
  estimation theory}.\hskip 1em plus 0.5em minus 0.4em\relax Prentice Hall Inc.
  Upper Saddle River, NJ, USA, 1993.

\bibitem{jose2011pilot}
J.~Jose, A.~Ashikhmin, T.~L. Marzetta, and S.~Vishwanath, ``Pilot contamination
  and precoding in multi-cell {TDD} systems,'' \emph{IEEE Trans. Wireless
  Commun.}, vol.~10, no.~8, pp. 2640--2651, 2011.

\bibitem{bjornson2017massive}
E.~Bj{\"o}rnson, J.~Hoydis, and L.~Sanguinetti, ``Massive {MIMO} networks:
  Spectral, energy, and hardware efficiency,'' \emph{Foundations and
  Trends{\textregistered} in Signal Processing}, vol.~11, no. 3-4, pp.
  154--655, 2017.

\bibitem{medard}
M.~Medard, ``The effect upon channel capacity in wireless communications of
  perfect and imperfect knowledge of the channel,'' \emph{IEEE Trans. Inf.
  Theory}, vol.~46, no.~3, pp. 933--946, May 2000.

\bibitem{tulino2004random}
A.~M. Tulino, S.~Verd{\'u} \emph{et~al.}, ``Random matrix theory and wireless
  communications,'' \emph{Foundations and Trends{\textregistered} in
  Communications and Information Theory}, vol.~1, no.~1, pp. 1--182, 2004.

\bibitem{liu2017large}
Y.-C. Liu, K.-F. Chen, and Y.~T. Su, ``Large-scale fading coefficient
  estimation in wireless massive {MIMO} systems,'' May~9 2017, {US} Patent
  9,647,734.

\bibitem{Emil10Myth}
E.~Bj\"{o}rnson, E.~G. Larsson, and T.~L. Marzetta, ``Massive {MIMO}: Ten myths
  and one critical question,'' \emph{IEEE Commun. Mag.}, vol.~54, no.~2, pp.
  114--123, Feb. 2016.

\bibitem{zadeh1963optimality}
L.~Zadeh, ``Optimality and non-scalar-valued performance criteria,'' \emph{IEEE
  Trans. on Autom. Control}, vol.~8, no.~1, pp. 59--60, 1963.

\bibitem{bjornson2014multiobjective}
E.~Bj\"{o}rnson, E.~A. Jorswieck, M.~Debbah, and B.~Ottersten, ``Multiobjective
  signal processing optimization: The way to balance conflicting metrics in
  {5G} systems,'' \emph{IEEE Signal Process. Mag.}, vol.~31, no.~6, pp. 14--23,
  2014.

\bibitem{marler2004survey}
R.~T. Marler and J.~S. Arora, ``Survey of multi-objective optimization methods
  for engineering,'' \emph{Struct. Multidisc. Optim.}, vol.~26, no.~6, pp.
  369--395, 2004.

\bibitem{3GPPmodel}
``Further advacement for {E-UTRA} physical layer aspectes ({R}elease 9),'' 3GPP
  TS 36.814, Tech. Rep., Mar. 2010.

\bibitem{DemirIcassp}
O.~T. Demir and T.~E. Tuncer, ``Hybrid beamforming with two bit {RF} phase
  shifters in single group multicasting,'' in \emph{2016 IEEE International
  Conference on Acoustics, Speech and Signal Processing (ICASSP)}, Mar. 2016,
  pp. 3271--3275.

\bibitem{mmWaveMulticast}
J.~Huang, Z.~Cheng, E.~Chen, and M.~Tao, ``Low-complexity hybrid analog/digital
  beamforming for multicast transmission in {mmWave} systems,'' in \emph{2017
  IEEE International Conference on Communications (ICC)}, May 2017, pp. 1--6.

\bibitem{DaiHPMulcsing}
M.~Dai and B.~Clerckx, ``Hybrid precoding for physical layer multicasting,''
  \emph{IEEE Commun. Lett.}, vol.~20, no.~2, pp. 228--231, Feb 2016.

\bibitem{EW18}
M.~Sadeghi, L.~Sanguinetti, and C.~Yuen, ``Hybrid precoding for multi-group
  physical layer multicasting,'' in \emph{European Wireless 2018}, Catania,
  Italy, May 2018, to appear.

\end{thebibliography}

\end{document}